\newcolumntype{L}{l<{\hspace{1cm}}}
\newcolumntype{T}{c<{\hspace{1cm}}}
\newcolumntype{C}{c<{\hspace{.8cm}}}
\begin{document}
    \title{Fraud and Data Availability Proofs: Maximising Light Client Security and Scaling Blockchains with Dishonest Majorities}
    \titlerunning{Fraud and Data Availability Proofs}

    \author{
        Mustafa Al-Bassam\inst{1}\and
        Alberto Sonnino\inst{1}\and
        Vitalik Buterin\inst{2}
    }
    \institute{
        University College London\\
        \email{\{m.albassam,a.sonnino\}@cs.ucl.ac.uk}\and
        Ethereum Research\\
        \email{vitalik@ethereum.org}
    }

    \maketitle

\newcommand\albertoi[1]{\todo[color=yellow,inline]{\textbf{Alberto:} #1}}
\newcommand\mustafai[1]{\todo[color=green,inline]{\textbf{Mustafa:} #1}}
\newcommand\vitaliki[1]{\todo[color=brown,inline]{\textbf{Vitalik:} #1}}

\newcommand\alberto[1]{}
\newcommand\mustafa[1]{}
\newcommand\vitalik[1]{}

\newcommand{\cf}{cf.\@\xspace}
\newcommand{\vs}{vs.\@\xspace}
\newcommand{\etc}{etc.\@\xspace}
\newcommand{\ala}{ala\@\xspace}
\newcommand{\wrt}{w.r.t.\@\xspace}
\newcommand{\etal}{\textit{et al.}\@\xspace}
\newcommand{\eg}{\textit{e.g.,}\@\xspace}
\newcommand{\ie}{\textit{i.e.,}\@\xspace}
\def\first{({\it i})\xspace}
\def\second{({\it ii})\xspace}
\def\third{({\it iii})\xspace}
\def\fourth{({\it iv})\xspace}
\def\fifth{({\it v})\xspace}
\def\sixth{({\it vi})\xspace}
    \begin{abstract}
    Light clients, also known as Simple Payment Verification (SPV) clients, are nodes which only download a small portion of the data in a blockchain, and use indirect means to verify that a given chain is valid. Typically, instead of validating block data, they assume that the chain favoured by the blockchain's consensus algorithm only contains valid blocks, and that the majority of block producers are honest. By allowing such clients to receive fraud proofs generated by fully validating nodes that show that a block violates the protocol rules, and combining this with probabilistic sampling techniques to verify that all of the data in a block actually is available to be downloaded, we can eliminate the honest-majority assumption for block validity, and instead make much weaker assumptions about a minimum number of honest nodes that rebroadcast data. Fraud and data availability proofs are key to enabling on-chain scaling of blockchains (\eg via sharding or bigger blocks) while maintaining a strong assurance that on-chain data is available and valid. We present, implement, and evaluate a novel fraud and data availability proof system.
\end{abstract}

    \section{Introduction and Motivation}\label{sec:introduction}

As cryptocurrencies and smart contract platforms have gained wider adoption, the scalability limitations of existing blockchains have been observed in practice. Popular services have stopped accepting Bitcoin \cite{nakamoto2008} payments due to transactions fees rising as high as \$20 \cite{orland2017, karlo2018}, and Ethereum's \cite{buterin2013} popular CryptoKitties smart contract caused the pending transactions backlog to increase six-fold \cite{wong2017}. Users pay higher fees as they compete to get their transactions included on the blockchain, due to on-chain space being limited, \eg by Bitcoin's block size limit \cite{antonopoulos2014} or Ethereum's block gas limit \cite{wood2018}.

While increasing on-chain capacity limits would yield higher transaction throughput, there are concerns that this would decrease decentralisation and security, because it would increase the resources required to fully download and validate the blockchain, and thus fewer users would be able to afford to run full nodes that independently validate the blockchain, requiring users to instead run light clients that assume that the chain favoured by the blockchain's consensus algorithm abides by the protocol rules for transaction validity \cite{marshall2017}. Light clients operate well under normal circumstances, but have weaker assurances when the majority of the consensus (\eg miners or block producers) is dishonest; for example, whereas a dishonest majority in the Bitcoin or Ethereum network can at present only censor, reverse or reorder transactions, if all clients are using light nodes, a majority of the consensus would be able to collude together to generate blocks that contain contain invalid transactions that, for example, create money out of thin air, and light nodes would not be able to detect this. On the other hand, full nodes would reject those invalid blocks immediately.

As a result, various scalability efforts have focused on off-chain scaling techniques such as payment channels \cite{poon2016}, where participants sign transactions off-blockchain, and settle the final balance on-chain. Payment channels have also been generalised to state channels \cite{miller2017}. However, as opening and settling channels involves on-chain transactions, on-chain scaling is still necessary for widespread adoption of payment and state channels.\footnote{Suppose a setting where all users used channels and channels only needed to be opened once and maintained with on-chain transactions once per year per used. To support a userbase equal in size to Facebook's ($\approx2.2$ billion \cite{nieva2018}), one would need 2.2 billion transactions per year, or $\approx70$ transactions per second, significantly higher than supported by the Bitcoin or Ethereum blockchains \cite{croman2016, young2018}. This does not take into account usages that require ``going on-chain'' more frequently, users requiring multiple channels, or the possibility of attacks on channels requiring more transactions to process.}

In this paper, we decrease the on-chain capacity \vs security trade-off by making it possible for light clients to receive and verify fraud proofs of invalid blocks from full nodes, so that they too can reject them, assuming that there is at least one honest full node willing to generate fraud proofs to be propagated within a maximum network delay. We also design a data availability proof system, a necessary complement to fraud proofs, so that light clients have assurance that the block data required for full nodes to generate fraud proofs from is available, given that there is a minimum number of honest light clients to reconstruct missing data from blocks. We implement and evaluate the security and efficiency of our overall design.

Our work also plays a key role in efforts to scale blockchains with sharding \cite{buterin2018, al-bassam2018, kokoris-kogias2018}, as in a sharded system no single node in the network is expected to download and validate the state of all shards, and thus fraud proofs are necessary to detect invalid blocks from malicious shards.

    \section{Background}\label{sec:background}

\subsection{Blockchain Models}

Briefly, the data structure of a blockchain consists of (literally) a chain of blocks. Each block contains two components: a header and a list of transactions. In addition to other metadata, the header stores at minimum the hash of the previous block (thus enabling the chain property), and the root of the Merkle tree that consists of all transactions in the block.

Blockchain networks have a consensus algorithm \cite{bano2017} to determine which chain should be favoured in the event of a fork, \eg if proof-of-work \cite{nakamoto2008} is used, then the chain with the most accumulated work is favoured. They also have a set of transaction validity rules that dictate which transactions are valid, and thus blocks that contain invalid transactions will never be favoured by the consensus algorthim and should in fact always be rejected.

Full nodes are nodes which download block headers as well as the list of transactions, verifying that the transactions are valid according to some transaction validity rules. Light clients only download block headers, and assume that the list of transactions are valid according to the transaction validity rules. Light clients verify blocks against the consensus rules, but not the transaction validity rules, and thus assume that the consensus is honest in that they only included valid transactions. Light clients can receive Merkle proofs from full nodes that a specific transaction or state object is included in a block header.

There are two major types of blockchain transaction models: Unspent Transaction Output (UTXO)-based, and account-based. Transactions in UTXO-based blockchains (\eg Bitcoin) contain references to previous transactions whose coins they wish to `spend'. As a single transaction may send coins to multiple addresses, a transaction has many `outputs', and thus new transactions contain references to these specific outputs. Each output can only be spent once.

On the other hand, account-based blockchains (\eg Ethereum), are somewhat simpler to work with (though sometimes more complex to apply parallelisation techniques to), as each transaction simply specifies a balance transfer from one address to another, without reference to previous transactions. In Ethereum, the block header also contains a root to a Merkle tree containing the state, which is the `current' information that is required to verify the next block; in Ethereum this consists of the balance, code and permanent storage of all of the accounts and contracts in the system.

\subsection{Merkle Trees and Sparse Merkle Trees}

A Merkle tree \cite{merkle1988} is a binary tree where every non-leaf node is labelled with the cryptographic hash of the concatenation of its children nodes. The root of a Merkle tree is thus a commitment to all of the items in its leaf nodes. This allows for Merkle proofs, which given some Merkle root, are proofs that a leaf is a part of the tree committed to by the root. A Merkle proof for some leaf consists of all of the ancestor and ancestor's sibling intermediate nodes for that leaf, up to the root of the tree, thus forming a sub-tree whose Merkle root can be recomputed to verify that the Merkle proof is valid. The size and verification time of a Merkle proof for a tree with $n$ leaves is $O(\log(n))$, as it is a tree.

A sparse Merkle tree \cite{laurie2012, rasmus2016} is a Merkle tree with $n$ leaves where $n$ is extremely large (\eg $n = 2^{256}$), but where almost all of the nodes have the same default value (\eg $0$). If $k$ nodes are non-zero, then at each intermediate level of the tree there will be a maximum of $k$ non-zero values, and all other values will be the same default value for that level: $0$ at the bottom level, $L_1 = H(0, 0)$ at the first intermediate level, $L_2 = H(L_1, L_1)$ at the second intermediate level, and so on. Hence, despite the exponentially large number of nodes in the tree, the root of the tree can be calculated in $O(k \times \log(n))$ time. A sparse Merkle tree allows for commitments to key-value maps, where values can be updated, inserted or deleted trivially in $O(\log(n))$ time. Merkle proofs of specific key-values entries are of size $\log(n)$ if constructed naively but can be compressed to size $\log(k)$ as intermediate nodes whose sibling have the default value do not need to explicitly be shown.

Systems such as Ripple and Ethereum at present use Patricia trees instead of sparse Merkle trees \cite{wood2018, schwartz2013}; we use sparse Merkle trees in this paper because of their greater simplicity.

\subsection{Erasure Codes and Reed-Solomon Codes}\label{sec:erasure-codes}

Erasure codes are error-correcting codes \cite{elias1954, peterson1972} working under the assumption of bit erasures rather than bit errors; in particular, the users knows which bits have to be reconstructed. Error-correcting codes transform a message of length $k$ into a longer message of length $n > k$ such that the original message can be recovered from a subset of the $n$ symbols.

Reed-Solomon (RS) codes \cite{wicker1994} have various applications and are among the most studied error-correcting codes. A Reed-Solomon code encodes data by treating a length-$k$ message as a list of elements $x_0, x_1, ..., x_{k-1}$ in some finite field (prime fields and binary fields are most frequently used), interpolating the polynomial $P(x)$ where $P(i) = x_i$ for all $0 \le i < k$, and then extending the list with $x_k, x_{k+1}, ..., x_{n-1}$ where $x_i = P(i)$. The polynomial $P$ can be recovered from any $k$ symbols from this longer list using techniques such as Lagrange interpolation, or more optimized and advanced techniques involving tools such as Fast Fourier transforms, and knowing $P$ one can then recover the original message. Reed-Solomon codes can detect and correct any combination of up to $\frac{n-k}{2}$ errors, or combinations of errors and erasures. RS codes have been generalised to multidimensional codes \cite{shea2003, dudavcek2016} in various ways \cite{shen1998, wu1992, saints1995}. In a $d$-dimensional code, the message is encoded into a square or cube or hybercube of size $k \times k \times ... \times k$, and a multidimensional polynomial $P(x_1, x_2, ..., x_d)$ is interpolated where $P(i_1, i_2, ..., i_n) = x_{i_1, i_2 ..., i_n}$, and this polynomial is extended to a larger $n \times n \times ... \times n$ square or cube or hypercube.

    \section{Assumptions and Threat Model}\label{sec:threat-model}

We present the network and threat model under which our fraud proofs (\Cref{sec:fraud-proofs}) and data availability proofs (\Cref{sec:availability}) apply.

\subsection{Preliminaries}

We present some primitives that we use in the rest of the paper.

\begin{itemize}
    \item $\mathsf{hash}(x)$ is a cryptographically secure hash function that returns the digest of $x$ (\eg SHA-256).
    \item $\mathsf{root}(L)$ returns the Merkle root for a list of items $L$.
    \item $\{e \rightarrow r\}$  denotes a Merkle proof that an element $e$ is a member of the Merkle tree committed by root $r$.
    \item $\mathsf{VerifyMerkleProof}(e, \{e \rightarrow r\}, r, n, i)$ returns $\mathsf{true}$ if the Merkle proof is valid, otherwise $\mathsf{false}$, where $n$ additionally denotes the total number of elements in the underlying tree and $i$ is the index of $e$ in the tree. This verifies that $e$ is at index $i$, as well as its membership.
    \item $\{k, v \rightarrow r\}$  denotes a Merkle proof that a key-value pair $k, v$ is a member of the Sparse Merkle tree committed by root $r$.
\end{itemize}

\subsection{Blockchain Model}\label{sec:blockchain-model}

We assume a generalised blockchain architecture, where the blockchain consists of a hash-based chain of block headers $H = (h_0, h_1, ..., h_n)$. Each block header $h_i$ contains a Merkle root $\mathsf{txRoot}_i$ of a list of transactions $T_i$, such that $\mathsf{root}(T_i) = \mathsf{txRoot}_i$. Given a node that downloads the list of transactions $N_i$ from the network, a block header $h_i$ is considered to be valid if \first $\mathsf{root}(N_i) = r_i$ and \second given some validity function
\begin{equation*}
    \mathsf{valid}(T, S) \in \{\mathsf{true}, \mathsf{false}\}
\end{equation*}
where $T$ is a list of transactions and $S$ is the state of the blockchain, then $\mathsf{valid}(T_i, S_{i-1})$ must return $\mathsf{true}$, where $S_i$ is the state of the blockchain after applying all of the transactions in $T_i$. We assume that $\mathsf{valid}(T, S)$ takes $O(n)$ time to execute, where $n$ is the number of transactions in $T$.

In terms of transactions, we assume that given a list of transactions $T_i = (t_i^0, t_i^1, ..., t_i^n)$, where $t_i^j$ denotes a transaction $j$ at block $i$, there exists a state transition function $\mathsf{transition}$ that returns the post-state $S'$ of executing a transaction on a particular pre-state $S$, or an error if the transition is illegal:
\begin{equation*}
    \mathsf{transition}(S, t) \in \{S', \mathsf{err}\}
\end{equation*}
\begin{equation*}
    \mathsf{transition}(\mathsf{err}, t) = \mathsf{err}
\end{equation*}

Thus given the intermediate post-states after applying every transaction one at a time, $I_i^j = \mathsf{transition}(I_i^{j-1}, t_i^j)$, and the base case $I_i^{-1} = S_{i-1}$, then $S_i = I_i^n$. Hence, $I_i^j$ denotes the intermediate state of the blockchain at block $i$ after applying transactions $t_i^0, t_i^1, ..., t_i^j$.

Therefore, $\mathsf{valid}(T_i, S_{i-1}) = \mathsf{true}$ if and only if $I_i^n \neq \mathsf{err}$.

In \Cref{sec:state-roots}, we explain how both a UTXO-based (\eg Bitcoin) and an account-based (\eg Ethereum) blockchain can be represented by this model.

\subsubsection{Aim.}

Our aim is to prove to clients that for a given block header $h_i$, $\mathsf{valid}(T_i, S_{i-i})$ returns $\mathsf{false}$ in less than $O(n)$ time and less than $O(n)$ space, relying on as few security assumptions as possible.

\subsection{Network Model}

We assume a network that consists of two types of nodes:

\begin{itemize}
    \item \textbf{Full nodes.} These are nodes which download and verify the entire blockchain. Honest full nodes store and rebroadcast valid blocks that they download to other full nodes, and broadcast block headers associated with valid blocks to light clients. Some of these nodes may participate in consensus (\ie by producing blocks).
    \item \textbf{Light clients.} These are nodes with computational capacity and network bandwidth that is too low to download and verify the entire blockchain. They receive block headers from full nodes, and on request, Merkle proofs that some transaction or state is a part of the block header.
\end{itemize}

We assume a network topology as shown in \Cref{fig:network-model}; full nodes communicate with each other, and light clients communicate with full nodes, but light clients do not communicate with each other. Additionally, we assume a maximum network delay $\delta$; such that if one honest node can connect to the network and download some data (\eg a block) at time $T$, then it is guaranteed that any other honest node will be able to do the same at time $T' \le T + \delta$.

\begin{figure*}
    \centering
    \includegraphics[width=0.6\linewidth,keepaspectratio]{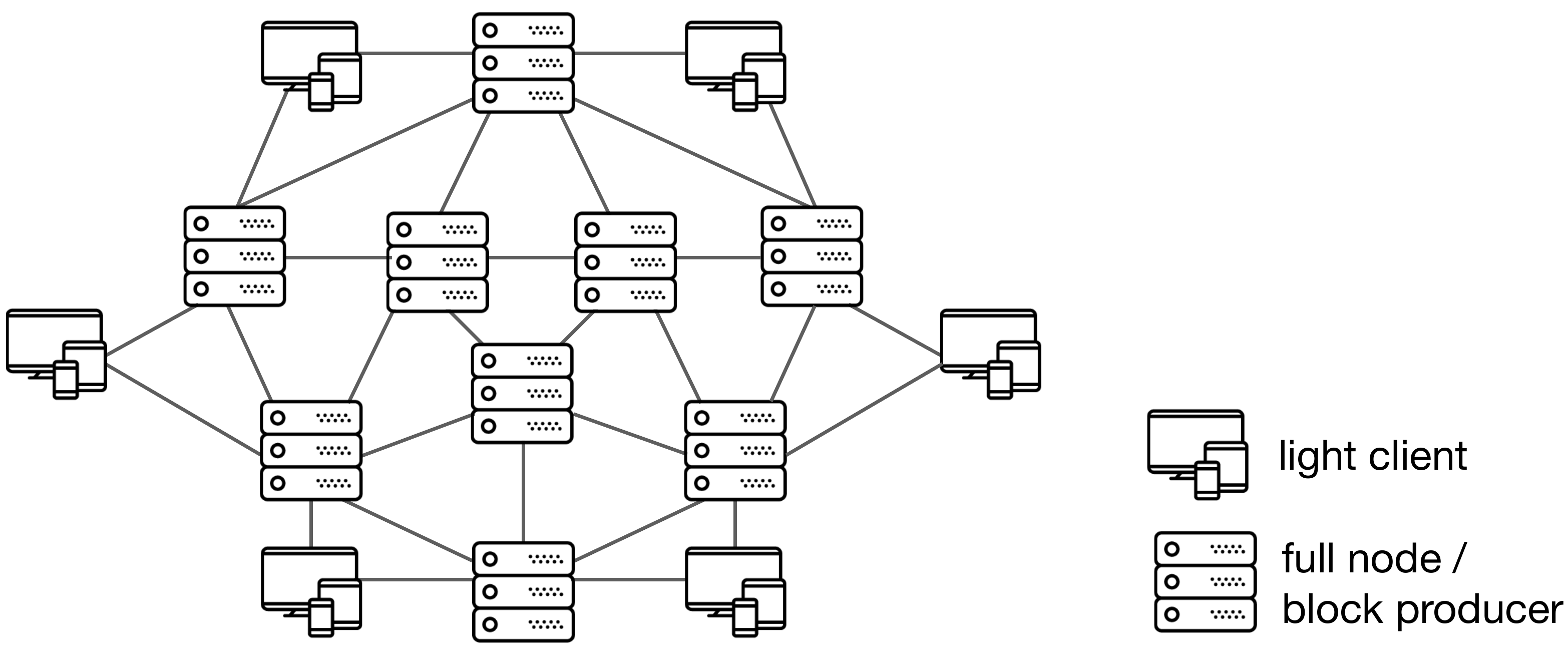}
    \caption{Network model---full nodes communicate with each other, and light clients communicate only with full nodes.}
    \label{fig:network-model}
\end{figure*}

\subsection{Threat Model}

We make the following assumptions in our threat model:

\begin{itemize}
    \item \textbf{Blocks and consensus.} Block headers may be created by adversarial actors, and thus may be invalid, and there is no honest majority of consensus-participating nodes that we can rely on.
    \item \textbf{Full nodes.} Full nodes may be dishonest, \eg they may not relay information (\eg fraud proofs), or they may relay invalid blocks. However, we assume that there is at least one honest full node that is connected to the network (\ie it is online, willing to generate and distribute fraud proofs, and is not under an eclipse attack \cite{heilman2015}).
    \item \textbf{Light clients.} We assume that each light client is connected to at least one honest full node. For data availability proofs, we assume a minimum number of honest light clients to allow for a block to be reconstructed. The specific number depends on the parameters of the system, and is analysed in \Cref{sec:sampling-security-analysis}.
\end{itemize}

Note that our goal is specifically to ensure that light clients do not accept blocks with invalid transactions, in the presence of a dishonest majority of consensus-participating nodes. This is different to double spending attacks, where a dishonest majority forks the chain to undo valid transactions, which is not the focus of this paper. An honest majority assumption is still necessary to prevent double spending attacks for both full nodes and light clients---our goal is to eliminate this assumption for transaction validity, thus significantly limiting the damage that a dishonest majority can do.
    \section{Fraud Proofs}\label{sec:fraud-proofs}

\begin{figure*}
    \centering
    \includegraphics[width=0.5\linewidth,keepaspectratio]{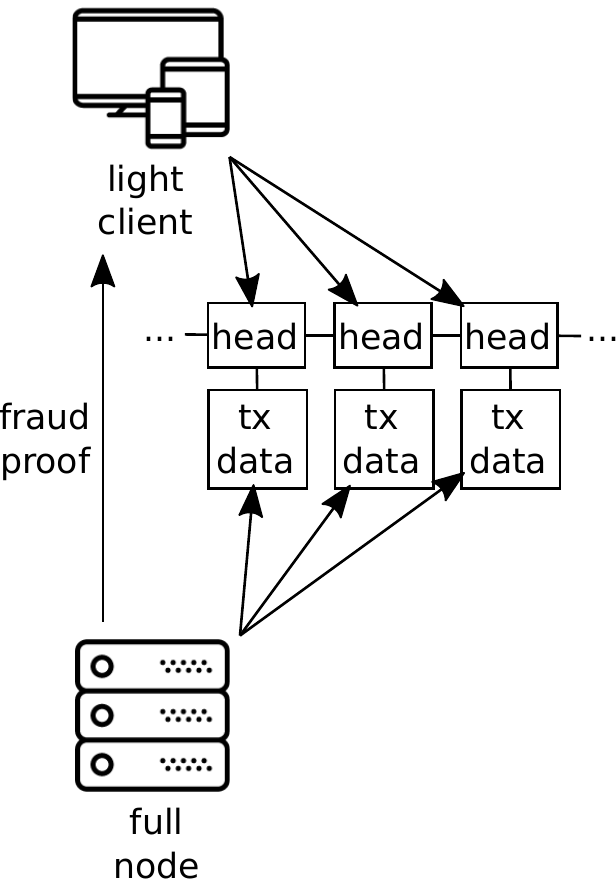}
    \caption{Overview of the architecture of a fraud proof system at a network level.}
    \label{fig:design}
\end{figure*}

\subsection{Block Structure}\label{sec:block-structure}

In order to support efficient fraud proofs, it is necessary to design a blockchain data structure that supports fraud proof generation by design. Extending the model described in \Cref{sec:blockchain-model}, a block header $h_i$ at height $i$ contains the following elements:

\begin{description}
    \item [$\mathsf{prevHash}_i$] The hash of the previous block header in the chain.
    \item [$\mathsf{dataRoot}_i$] The root of the Merkle tree of the data (\eg transactions) included in the block.
    \item [$\mathsf{dataLength}_i$] The number of leaves represented by $\mathsf{dataRoot}_i$.
    \item [$\mathsf{stateRoot}_i$] The root of a sparse Merkle tree of the state of the blockchain (to be described in \Cref{sec:state-roots}).
    \item [$\mathsf{additionalData}_i$] Additional arbitrary data that may be required by the network (\eg in proof-of-work, this may include a nonce and the target difficulty threshold).
\end{description}

Additionally, the hash of each block header $\mathsf{blockHash}_i = \mathsf{hash}(h_i)$ is also stored by clients and nodes.

Note that typically blockchains have the Merkle root of transactions included in headers. We have abstracted this to a `Merkle root of data' called $\mathsf{dataRoot}_i$, because as we shall see in \Cref{sec:shares-periods}, as well as including transactions in the block data, we also need to include intermediate state roots.

\subsection{State Root and Execution Trace Construction}\label{sec:state-roots}

To instantiate a blockchain based on the state-based model described in \Cref{sec:blockchain-model}, we make use of sparse Merkle trees, and represent the state as a key-value map. We explain how both a UTXO-based and an account-based blockchain can be instantiated atop such a model:
\begin{itemize}
    \item \textbf{UTXO-based.} The keys in the map are transaction output identifiers \eg $\mathsf{hash}(\mathsf{hash}(d)||i)$ where $d$ is the data of the transaction and $i$ is the index of the output being referred to in $d$. The value of each key is the state of each transaction output identifier: either $unspent$ ($1$) or $nonexistent$ ($0$, the default value).
    \item \textbf{Account-based.} This is already a key-value map, where the key is the account or storage variable, and the value is the balance of the account or the value of the variable.
\end{itemize}

The state would need to keep track of all data that is relevant to block processing, including for example the cumulative transaction fees paid to the creator of the current block after each transaction.

We now define a variation of the function $\mathsf{transition}$ defined in \Cref{sec:blockchain-model}, called $\mathsf{rootTransition}$, that performs transitions without requiring the whole state tree, but only the state root and Merkle proofs of parts of the state tree that the transaction reads or modifies (which we call ``state witness'', or $w$ for short). These Merkle proofs are effectively expressed as a sub-tree of the same state tree with a common root.
\begin{equation*}
\mathsf{rootTransition}(\mathsf{stateRoot}, t, w) \in \{\mathsf{stateRoot'}, \mathsf{err}\}
\end{equation*}

A state witness $w$ consists of a set of key-value pairs and their associated Sparse Merkle proofs in the state tree, $w = \{(k_1, v_1, \{k_1, v_1 \rightarrow \mathsf{stateRoot}\}), (k_2, v_2,\allowbreak \{k_2, v_2\allowbreak \rightarrow \mathsf{stateRoot}\}), ..., (k_n, v_n, \{k_n, v_n \rightarrow \mathsf{stateRoot}\})\}$.

After executing $t$ on the parts of the state shown by $w$, if $t$ modifies any of the state, then the new resulting $\mathsf{stateRoot}'$ can be generated by computing the root of the new sub-tree with the modified leafs. Note that if $w$ is invalid and does not contain all of the parts of the state required by $t$ during execution, then $\mathsf{err}$ is returned.

Let us denote, for the list of transactions $T_i = (t_i^0, t_i^1, ..., t_i^n)$, where $t_i^j$ denotes a transaction $j$ at block $i$, then $w_i^j$ is the state witness for transaction $w_i^j$ for $\mathsf{stateRoot}_i$.

Thus given the intermediate state roots after applying every transaction one at a time, $\mathsf{interRoot}_i^j = \mathsf{rootTransition}(\mathsf{interRoot}_i^{j-1}, t_i^j, w_i^j)$, and the base case $\mathsf{interRoot}_i^{-1} = \mathsf{stateRoot}_{i-1}$, then $\mathsf{stateRoot}_i = \mathsf{interRoot}_i^n$. Hence, $\mathsf{interRoot}_i^j$ denotes the intermediate state root at block $i$ after applying transactions $t_i^0, t_i^1, ..., t_i^j$.

\subsection{Data Root and Periods}\label{sec:shares-periods}

\begin{figure*}
    \centering
    \includegraphics[width=1\linewidth,keepaspectratio]{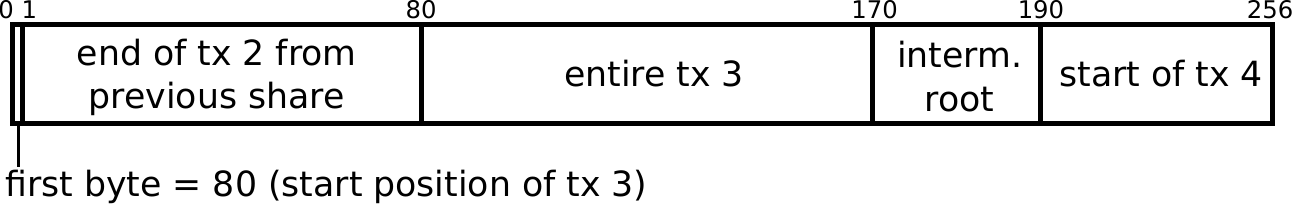}
    \caption{Example of a 256-byte share.}
    \label{fig:share}
\end{figure*}

The data represented by the $\mathsf{dataRoot}_i$ of a block contains transactions arranged into fixed-size chunks of data called `shares', interspersed with intermediate state roots called `traces' between transactions. We denote $\mathsf{trace}_i^j$ as the $j$th intermediate state root in block $i$. It is necessary to arrange data into fixed-size shares to allow for data availability proofs as we shall see in \Cref{sec:availability}. Each leaf in the data tree represents a share.

As a share may not contain entire transactions but only parts of transactions as shown in \Cref{fig:share}, we may reserve the first byte in each share to be the starting position of the first transaction that starts in the share, or $0$ if no transaction starts in the share. This allows a protocol message parser to establish the message boundaries without needing every transaction in the block.

Given a list of shares $(\mathsf{sh}_0, \mathsf{sh}_1, ..., \mathsf{sh}_n)$ we define a function $\mathsf{parseShares}$ which parses these shares and outputs an ordered list of $t$ messages $(m_0, m_1, ..., m_t)$, which are either transactions or intermediate state roots. For example, $\mathsf{parseShares}$ on some shares in the middle of some block $i$ may return $(\mathsf{trace}_i^1, t_i^4, t_i^5, t_i^6, \mathsf{trace}_i^2)$.
\begin{align*}
    \mathsf{parseShares}((\mathsf{sh}_0, \mathsf{sh}_1, ..., \mathsf{sh}_n)) = (m_0, m_1, ..., m_t)
\end{align*}

Note that as the block data does not necessarily contain an intermediate state root after every transaction, we assume a `period criterion', a protocol rule that defines how often an intermediate state root should be included in the block's data. For example, the rule could be at least once every $p$ transactions, or $b$ bytes or $g$ gas (\ie in Ethereum \cite{wood2018}).

We thus define a function $\mathsf{parsePeriod}$ which parses a list of messages, and returns a pre-state intermediate root $\mathsf{trace}_i^x$, a post-state intermediate root $\mathsf{trace}_i^{x+1}$, and a list of transaction $(t_i^g, t_i^{g+1}, ..., t_i^{g+h})$ such that applying these transactions on $\mathsf{trace}_i^x$ is expected to return $\mathsf{trace}_i^{x+1}$. If the list of messages violate the period criterion, then the function may return $\mathsf{err}$, for example if there are too many transactions in the messages to constitute a period.

\begin{align*}
    \mathsf{parsePeriod}((m_0, m_1, ..., m_t)) \in \{(\mathsf{trace}_i^x, \mathsf{trace}_i^{x+1}, (t_i^g, t_i^{g+1}, ..., t_i^{g+h})), \mathsf{err}\}
\end{align*}

Note that $\mathsf{trace}_i^x$ may be $nil$ if no pre-state root was parsed, as this may be the case if the first messages in the block are being parsed, and thus the pre-state root is the state root of the previous block $\mathsf{stateRoot}_{i-i}$. Likewise, $\mathsf{trace}_i^{x+1}$ may be $nil$ if no post-state root was parsed \ie if the last messages in the block are being parsed, as the post-state root would be $\mathsf{stateRoot}_i$.

\subsection{Proof of Invalid State Transition}\label{sec:proof-invalid-state-transition}

A faulty or malicious miner may provide an incorrect $\mathsf{stateRoot}_i$. We can use the execution trace provided in $\mathsf{dataRoot}_i$ to prove that some part of the execution trace was invalid.

We define a function $\mathsf{VerifyTransitionFraudProof}$ and its parameters which verifies fraud proofs received from full nodes. We denote $d_i^j$ as share number $j$ in block $i$.

\textbf{Summary of \textsf{VerifyTransitionFraudProof}.} A fraud proof consists of the relevant shares in the block that contain a bad state transition, Merkle proofs for those shares, and the state witnesses for the transactions contained in those shares. The function takes as input a fraud proof, and checks if applying the transactions in a period of the block's data on the intermediate pre-state root results in the intermediate post-state root specified in the block data. If it does not, then the fraud proof is valid, and the block that the fraud proof is for should be permanently rejected by the client.

\begin{align*}
&\mathsf{VerifyTransitionFraudProof}(\mathsf{blockHash}_i,\\
    &\indent(d_i^y, d_i^{y+1}, ..., d_i^{y+m}), y, \tag*{(shares)}\\
    &\indent(\{d_i^y \rightarrow \mathsf{dataRoot}_i\}, \{d_i^{y+1} \rightarrow \mathsf{dataRoot}_i\}, ..., \{d_i^{y+m} \rightarrow \mathsf{dataRoot}_i\}),\\
    &\indent(w_i^y, w_i^{y+1}, ..., w_i^{y+m}), \tag*{(state witnesses)}\\
&) \in \{\mathsf{true}, \mathsf{false}\}
\end{align*}

$\mathsf{VerifyTransitionFraudProof}$ returns $\mathsf{true}$ if all of the following conditions are met, otherwise $\mathsf{false}$ is returned:
\begin{enumerate}
    \item $\mathsf{blockHash}_i$ corresponds to a block header $h_i$ that the client has downloaded and stored.
    \item For each share $d_i^{y+a}$ in the proof, $\mathsf{VerifyMerkleProof}(d_i^{y+a}, \{d_i^{y+a} \rightarrow \mathsf{dataRoot}_i\},\allowbreak \mathsf{dataRoot}_i, \mathsf{dataLength}_i, y+a)$ returns $\mathsf{true}$.
    \item Given $\mathsf{parsePeriod}(\mathsf{parseShares}((d_i^y, d_i^{y+1}, ..., d_i^{y+m}))) \in \{(\mathsf{trace}_i^x, \mathsf{trace}_i^{x+1}, (t_i^g,\allowbreak t_i^{g+1}, ..., t_i^{g+h})), \mathsf{err}\}$, the result must not be $\mathsf{err}$. If $\mathsf{trace}_i^x$ is $nil$, then $y = 0$ is true, and if $\mathsf{trace}_i^{x+1}$ is $nil$, then $y+m = \mathsf{dataLength}_i$ is true.
    \item Check that applying $(t_i^g,\allowbreak t_i^{g+1}, ..., t_i^{g+h})$ on $\mathsf{trace}_i^x$ results in $\mathsf{trace}_i^{x+1}$. Formally, let the intermediate state roots after applying every transaction in the proof one at a time be $\mathsf{interRoot}_i^j = \mathsf{rootTransition}(\mathsf{interRoot}_i^{j-1}, t_i^j, w_i^j)$. If $\mathsf{trace}_i^x$ is not $nil$, then the base case is $\mathsf{interRoot}_i^{y} = \mathsf{trace}_i^x$, otherwise $\mathsf{interRoot}_i^{y} = \mathsf{stateRoot}_{i-1}$. If $\mathsf{trace}_i^{x+1}$ is not $nil$, $\mathsf{trace}_i^{x+1} = \mathsf{interRoot}_i^{g+h}$ is true, otherwise $\mathsf{stateRoot}_{i} = \mathsf{interRoot}_i^{y+m}$ is true.\footnote{For simplicity, we assume a model where state witnesses are provided for every individual intermediate state root within the trace, but it is also possible to only provide state witnesses only for the trace intermediate pre-state root, and execute the transactions as a single batch.}
\end{enumerate}

\subsection{Transaction Fees}

As discussed in \Cref{sec:state-roots}, the state would need to keep track of all data that is relevant to block processing. A block producer may attempt to collect more transaction fees than is afforded to them by the transactions in the block. In order to make this detectable by a fraud proof as part of the model we have described, we can introduce a special key in the state tree called $\mathsf{\_\_fees\_\_}$, which represents the cumulative fees in the block after applying each transaction, and is reset to $0$ after applying the transaction where the block producer collects the fees.

    \section{Data Availability Proofs}\label{sec:availability}

A malicious block producer could prevent full nodes from generating fraud proofs by withholding the data needed to recompute $\mathsf{dataRoot}_i$ and only releasing the block header to the network. The block producer could then only release the data---which may contain invalid transactions or state transitions---long after the block has been published, and make the block invalid. This would cause a rollback of transactions on the ledger of future blocks. It is therefore necessary for light clients to have a level of assurance that the data matching $\mathsf{dataRoot}_i$ is indeed available to the network.

We propose a data availability scheme based on Reed-Solomon erasure coding, where light clients request random shares of data to get high probability guarantees that all the data associated with the root of a Merkle tree is available. The scheme assumes there is a sufficient number of honest light clients making the same requests such that the network can recover the data, as light clients upload these shares to full nodes, if a full node who does not have the complete data requests it. It is fundamental for light clients to have assurance that all the transaction data is available, because it is only necessary to withhold a few bytes to hide an invalid transaction in a block.

We define below \emph{soundness} and \emph{agreement} and analyse them in \Cref{sec:properties-security-analysis}.
\begin{definition}[Soundness]\label{def:soundness}
    If an honest light client accepts a block as available, then at least one honest full node has the full block data or will have the full block data within some known maximum delay $k * \delta$ where $\delta$ is the maximum network delay.
\end{definition}
\begin{definition}[Agreement]\label{def:agreement}
    If an honest light client accepts a block as available, then all other honest light clients will accept that block as available within some known maximum delay $k * \delta$ where $\delta$ is the maximum network delay.
\end{definition}

\subsection{Strawman 1D Reed-Solomon Availability Scheme}\label{sec:strawman-availability}

To provide some intuition, we first describe a strawman data availability scheme, based on standard Reed-Solomon coding.

A block producer compiles a block of data consisting of $k$ shares, extends the data to $2k$ shares using Reed-Solomon encoding, and computes a Merkle root (the $\mathsf{dataRoot}_i$) over the extended data, where each leaf corresponds to one share.

When light clients receive a block header with this $\mathsf{dataRoot}_i$, they randomly sample shares from the Merkle tree that $\mathsf{dataRoot}_i$ represents, and only accept a block once it has received all of the shares requested. If an adversarial block producer makes more than 50\% of the shares unavailable to make the full data unrecoverable (recall in \Cref{sec:erasure-codes} that Reed-Solomon codes allow recovery of $2t$ shares from any $t$ shares), there is a 50\% chance that a client will randomly sample an unavailable share in the first draw, a 25\% chance after two draws, a 12.5\% chance after three draws, and so on, if they draw with replacement. (In the full scheme, they will draw without replacement, and so the probability will be even lower.)

Note that for this scheme to work, there must be enough light clients in the network sampling enough shares so that block producers will be required to release more than 50\% of the shares in order to pass the sampling challenge of all light clients, and so that the full block can be recovered. An in-depth probability and security analysis is provided in \Cref{sec:sampling-security-analysis}.

The problem with this scheme is that an adversarial block producer may incorrectly construct the extended data, and thus the incomplete block is unrecoverable from the extended data even if more than 50\% of the data is available. With standard Reed-Solomon encoding, the fraud proof that the extended data is invalid is the original data itself, as clients would have to re-encode all data locally to verify the mismatch with the given extended data, and thus it requires $O(n)$ data with respect to the size of the block. Therefore, we instead use multi-dimensional encoding, as described in \Cref{sec:2d-rs-tree}, so that proofs of incorrectly generated codes are limited to a specific axis---rather than the entire data---reducing proof size to $O(\sqrt[d]{n})$ where $d$ is the number of dimensions of the encoding. For simplicity, we will only consider two-dimensional Reed-Solomon encoding in this paper, but our scheme can be generalised to higher dimensions.

We note in \Cref{sec:succinct-proofs} that succinct proofs of computation could be an alternative future solution to this problem instead of multi-dimensional encoding.

\subsection{2D Reed-Solomon Encoded Merkle Tree Construction}\label{sec:2d-rs-tree}

\begin{figure*}
    \centering
    \includegraphics[width=0.55\linewidth,keepaspectratio]{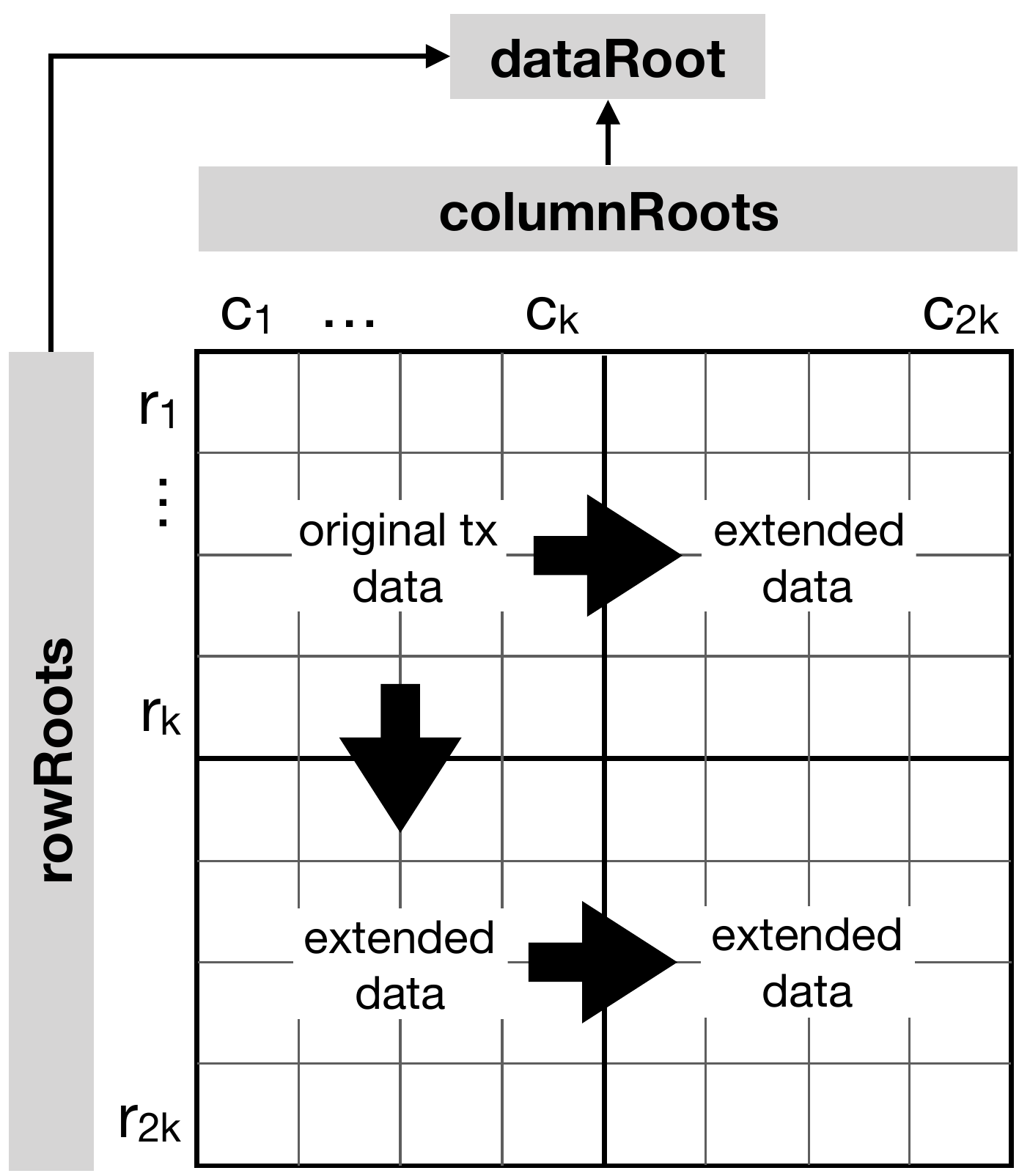}
    \caption{Diagram showing a 2D Reed-Solomon encoding. The original data is initially arranged in a  $k \times k$ matrix, which is then `extended' to a  $2k \times 2k$ matrix applying multiple times Reed-Solomon encoding.}
   \label{fig:reedsolomon}
\end{figure*}

A 2D Reed-Solomon Encoded Merkle tree can be constructed as follows from a block of data:
\begin{enumerate}
    \item Split the raw data into shares of size $\mathsf{shareSize}$ each, and arrange them into a $k \times k$ matrix; apply padding if the last share is not exactly of size $\mathsf{shareSize}$, or if there are not enough shares to complete the matrix.
    \item Apply Reed-Solomon encoding on each row and column of the $k \times k$ matrix to extend the data horizontally and vertically; \ie encode each row and each column. Then apply a third time a Reed-Solomon encoding horizontally, on the vertically extended portion of the matrix to create a $2k \times 2k$ matrix, as shown in~\Cref{fig:reedsolomon}. This results in an extended matrix $M_i$ for block $i$.
    \item Compute the root of the Merkle tree for each row and column in the $2k \times 2k$ matrix, where each leaf is a share. We have $\mathsf{rowRoot}_i^j = \mathsf{root}((M_i^{j,1}, M_i^{j,2},\allowbreak ..., M_i^{j,2k}))$ and $\mathsf{columnRoot}_i^j = \mathsf{root}((M_i^{1,j}, M_i^{2,j}, ..., M_i^{2k,j}))$, where $M_i^{x,y}$ represents the share in row $x$, column $y$ in the matrix.
    \item Compute the root of the Merkle tree of the roots computed in step 3 and use this as $\mathsf{dataRoot}_i$. We have $\mathsf{dataRoot}_i = \mathsf{root}((\mathsf{rowRoot}_i^1, \mathsf{rowRoot}_i^2, ...,\allowbreak \mathsf{rowRoot}_i^{2k}, \mathsf{columnRoot}_i^1, \mathsf{columnRoot}_i^2, ..., \mathsf{columnRoot}_i^{2k}))$.
\end{enumerate}

The resulting tree of $\mathsf{dataRoot}_i$ has $\mathsf{dataLength}_i = 2 \times (2k)^2$ elements, where the first $\frac{1}{2}\mathsf{dataLength}_i$ elements are in leaves via the row roots, and the latter half are in leaves via the column roots.

Note that although it is possible to present a Merkle proof from $\mathsf{dataRoot}_i$ to an individual share, it is important to note that a Merkle tree has $2^x$ leaves, and the Merkle sub-trees for the row and column roots are constructed independently from $\mathsf{dataRoot}_i$. Therefore it is necessary to have a wrapper function around $\mathsf{VerifyMerkleProof}$ called $\mathsf{VerifyShareMerkleProof}$ with the same parameters which takes into account how the underlying Merkle tree deals with an unbalanced number of leaves; this may involve calling $\mathsf{VerifyMerkleProof}$ twice for different portions of the path, or offsetting the index.\footnote{For example, if the underlying tree simply repeats the last leaves to pad the tree to $2^x$ leaves, then the wrapper function may be $\mathsf{VerifyShareMerkleProof}(e, \{e \rightarrow r\}, r, n, i) = \mathsf{VerifyMerkleProof}(e, \{e \rightarrow r\}, r, n, i + \lfloor i \mathbin{/} \sqrt{\frac{1}{2}i}\rfloor \times (2^{\lceil\log_{2}(\sqrt{\frac{1}{2}i})\rceil}-\sqrt{\frac{1}{2}i}))$.}

The width of the matrix can be derived as $\mathsf{matrixWidth}_i = \sqrt{\frac{1}{2}\mathsf{dataLength}_i}$. If we are only interested in the row and column roots of $\mathsf{dataRoot}_i$, rather than the actual shares, then we can assume that $\mathsf{dataRoot}_i$ has $2 \times \mathsf{matrixWidth}_i$ leaves when verifying a Merkle proof of a row or column root.

A light client or full node is able to reconstruct $\mathsf{dataRoot}_i$ from all the row and column roots by recomputing step 4. In order to gain data availability assurances, all light clients should at minimum download all the row and column roots needed to reconstruct $\mathsf{dataRoot}_i$ and check that step 4 was computed correctly, because as we shall see in \Cref{sec:fraud-proofs-extended-data}, they are necessary to generate fraud proofs of incorrectly generated extended data.

We nevertheless represent all of the row and column roots as a a single $\mathsf{dataRoot}_i$ to allow `super-light' clients which do not download the row and column roots, but these clients cannot be assured of data availability and thus do not fully benefit from the increased security of allowing fraud proofs.

\subsection{Random Sampling and Network Block Recovery}\label{sec:random-sampling}

In order for any share in the 2D Reed-Solomon matrix to be unrecoverable, then at least $(k+1)^2$ out of $(2k)^2$ shares must be unavailable (see \Cref{th:min-shares-unavailable}). Thus when light clients receive a new block header from the network, they should randomly sample $0 < s < (k+1)^2$ distinct shares from the extended matrix, and only accept the block if they receive all shares. Additionally, light clients gossip shares that they have received to the network, so that the full block can be recovered by honest full nodes.

The protocol between a light client and the full nodes that it is connected to works as follows:
\begin{enumerate}
    \item The light client receives a new block header $h_i$ from one of the full nodes it is connected to, and a set of row and column roots $R = (\mathsf{rowRoot}_i^1, \mathsf{rowRoot}_i^2, ...,\allowbreak \mathsf{rowRoot}_i^{2k}, \mathsf{columnRoot}_i^1, \mathsf{columnRoot}_i^2, ..., \mathsf{columnRoot}_i^{2k})$. If the check $\mathsf{root}(R)\allowbreak = \mathsf{dataRoot}_i$ is false, then the light client rejects the header.
    \item The light client randomly chooses a set of unique $(x, y)$ coordinates $S = \{(x_0, y_0) (x_1, y_1), ..., (x_n, y_n)\}$ where $0 < x \leq \mathsf{matrixWidth}_i$ and $0 < y \leq \mathsf{matrixWidth}_i$, corresponding to points on the extended matrix, and sends them to one or more of the full nodes it is connected to.
    \item If a full node has all of the shares corresponding to the coordinates in $S$ and their associated Merkle proofs, then for each coordinate $(x_a, y_b)$ the full node responds with $M_i^{x_a,y_b}, \{M_i^{x_a,y_b} \rightarrow \mathsf{rowRoot}_i^a\}$ or $M_i^{x_a,y_b}, \{M_i^{x_a,y_b} \rightarrow \mathsf{columnRoot}_i^b\}$. Note that there are two possible Merkle proofs for each share; one from the row roots, and one from the column roots, and thus the full node must also specify for each Merkle proof if it is associated with a row or column root.
    \item For each share $M_i^{x_a,y_b}$ that the light client has received, the light client checks $\mathsf{VerifyMerkleProof}(M_i^{x_a,y_b}, \{M_i^{x_a,y_b} \rightarrow \mathsf{rowRoot}_i^a\}, \mathsf{rowRoot}_i^a, \mathsf{matrixWidth}_i, b)$ is $\mathsf{true}$ if the proof is from a row root, otherwise if the proof is from a column root then $\mathsf{VerifyMerkleProof}(M_i^{x_a,y_b}, \{M_i^{x_a,y_b} \rightarrow \mathsf{columnRoot}_i^b\}, \mathsf{columnRoot}_i^b,\allowbreak \mathsf{matrixWidth}_i, a)$ is $\mathsf{true}$.
    \item Each share and valid Merkle proof that is received by the light client is gossiped to all the full nodes that the light client is connected to if the full nodes do not have them, and those full nodes gossip it to all of the full nodes that they are connected to.
    \item If all the proofs in step 4 succeeded, and no shares are missing from the sample made in step 2, then the block is accepted as available if within $2 \times \delta$ no fraud proofs for the block's erasure code is received (\Cref{sec:fraud-proofs-extended-data}).
\end{enumerate}

\subsection{Selective Share Disclosure}\label{sec:selective-share-disclosure}

If a block producer selectively releases shares as light clients ask for them, up to $(k+1)^2$ shares, they can violate the soundness property (\Cref{def:soundness}) of the clients that ask for the first $(k+1)^2$ out of $(2k)^2$ shares, as they will accept the blocks as available despite them being unrecoverable.

This can be alleviated if one assumes an enhanced network model where a sufficient number of honest light clients make requests such that more than $(k+1)^2$ shares will be sampled, and that each sample request for each share is anonymous (\ie sample requests cannot be linked to the same client) and the distribution in which every sample request is received is uniformly random, for example by using a mix net \cite{chaum1981}. As the network would not be able to link different per-share sample requests to the same clients, shares cannot be selectively released on a per-client basis.

We thus assume two network connection models that sample requests can be made under, which we will analyse in the security analysis:
\begin{itemize}
    \item \textbf{Standard model.} Sample requests are linkable to the clients that made them, and the order that they are received is predictable (\eg they are received in the order that they were sent).
    \item \textbf{Enhanced model.} Different sample requests cannot be linked to the same client, and the order that they are received by the network is uniformly random with respect to other requests.
\end{itemize}

\subsection{Fraud Proofs of Incorrectly Generated Extended Data}\label{sec:fraud-proofs-extended-data}

If a full node has enough shares to recover a particular row or column, and after doing so detects that recovered data does not match its respective row or column root, then it must distribute a fraud proof consisting of enough shares in that row or column to be able to recover it, and a Merkle proof for each share.

We define a function $\mathsf{VerifyCodecFraudProof}$ that verifies these fraud proofs, where $\mathsf{axisRoot}_i^j \in \{\mathsf{rowRoot}_i^j, \mathsf{columnRoot}_i^j\}$. These proofs can also be verified by `super-light' clients as they do not assume any knowledge of the row and column roots. We denote $\mathsf{axis}$ and $\mathsf{ax}_j$ as row or column boolean indicators; $0$ for rows and $1$ for columns.

\textbf{Summary of \textsf{VerifyCodecFraudProof}.} The fraud proof consists of the Merkle root of the incorrectly generated row or column, a Merkle proof that the root is in the data tree, enough shares to be able to reconstruct that row or column, and a Merkle proof that each share is in the data tree. The function takes as input a fraud proof, and checks that \first all of the shares given by the prover are in the same row or column and \second that the recovered row or column does not match the row or column root in the block. If both conditions are true, then the fraud proof is valid, and the block that the fraud proof is for should be permanently rejected by the client.

\begin{align*}
&\mathsf{VerifyCodecFraudProof}(\mathsf{blockHash}_i,\\
    &\indent \mathsf{axisRoot}_i^j, \{\mathsf{axisRoot}_i^j \rightarrow \mathsf{dataRoot}_i\}, j, \tag*{(row or column root)}\\
    &\indent \mathsf{axis}, \tag*{(row or column indicator)}\\
    &\indent((\mathsf{sh}_0, \mathsf{pos}_0, \mathsf{ax}_0), (\mathsf{sh}_1, \mathsf{pos}_1, \mathsf{ax}_1), ..., (\mathsf{sh}_k, \mathsf{pos}_k, \mathsf{ax}_k)), \tag*{(shares)}\\
    &\indent(\{\mathsf{sh}_0 \rightarrow \mathsf{dataRoot}_i\}, \{\mathsf{sh}_1 \rightarrow \mathsf{dataRoot}_i\}, ..., \{\mathsf{sh}_k \rightarrow \mathsf{dataRoot}_i\})\\
&) \in \{\mathsf{true}, \mathsf{false}\}
\end{align*}

Let $\mathsf{recover}$ be a function that takes a list of shares and their positions in the row or column $((\mathsf{sh}_0, \mathsf{pos}_0), (\mathsf{sh}_1, \mathsf{pos}_1), ..., (\mathsf{sh}_k, \mathsf{pos}_k))$, and the length of the original row or column $k$. The function outputs the full recovered shares $(\mathsf{sh}_0, \mathsf{sh}_1, ..., \mathsf{sh}_{2k})$ or $\mathsf{err}$ if the shares are unrecoverable.
\begin{equation*}
    \mathsf{recover}(((\mathsf{sh}_0, \mathsf{pos}_0), (\mathsf{sh}_1, \mathsf{pos}_1), ..., (\mathsf{sh}_k, \mathsf{pos}_k)), k) \in \{(\mathsf{sh}_0, \mathsf{sh}_1, ..., \mathsf{sh}_{2k}), \mathsf{err}\}
\end{equation*}

$\mathsf{VerifyCodecFraudProof}$ returns true if all of the following conditions are met:
\begin{enumerate}
    \item $\mathsf{blockHash}_i$ corresponds to a block header $h_i$ that the client has downloaded and stored.
    \item If $\mathsf{axis} = 0$ (row root), $\mathsf{VerifyMerkleProof}(\mathsf{axisRoot}_i^j, \{\mathsf{axisRoot}_i^j \rightarrow \mathsf{dataRoot}_i\},\allowbreak \mathsf{dataRoot}_i, 2 \times \mathsf{matrixWidth}_i, j)$ returns $\mathsf{true}$.
    \item If $\mathsf{axis} = 1$ (col. root), $\mathsf{VerifyMerkleProof}(\mathsf{axisRoot}_i^j, \{\mathsf{axisRoot}_i^j \rightarrow \mathsf{dataRoot}_i\},\allowbreak \mathsf{dataRoot}_i, 2 \times \mathsf{matrixWidth}_i, \frac{1}{2}\mathsf{dataLength}_i + j)$ returns $\mathsf{true}$.
    \item For each $(\mathsf{sh}_x, \mathsf{pos}_x, \mathsf{ax}_x)$, $\mathsf{VerifyShareMerkleProof}(\mathsf{sh}_x, \{\mathsf{sh}_x \rightarrow \mathsf{dataRoot}_i\},\allowbreak \mathsf{dataRoot}_i, \mathsf{dataLength}_i, \mathsf{index})$ returns true, where $\mathsf{index}$ is the expected index of the $\mathsf{sh}_x$ in the data tree based on $\mathsf{pos}_x$ assuming it is in the same row or column as $\mathsf{axisRoot}_i^j$. See \Cref{sec:computation-index} for how $\mathsf{index}$ can be computed.

    Note that full nodes can specify Merkle proofs of shares in rows or columns from either the row or column roots \eg if a row is invalid but the full nodes only has Merkle proofs for the row's share from column roots. This also allows for full nodes to generate fraud proofs if there are inconsistencies in the data between rows and columns \eg if the same cell in the matrix has a different share in its row and column trees.
    \item $\mathsf{root}(\mathsf{recover}(((\mathsf{sh}_0, \mathsf{pos}_0), (\mathsf{sh}_1, \mathsf{pos}_1), ..., (\mathsf{sh}_k, \mathsf{pos}_k)), k)) = \mathsf{axisRoot}_i^j$ is false.
\end{enumerate}

If $\mathsf{VerifyCodecFraudProof}$ for $\mathsf{blockHash}_i$ returns $\mathsf{true}$, then the block header $h_i$ is permanently rejected by the light client.

\subsection{Sampling Security Analysis}\label{sec:sampling-security-analysis}

We present how the data availability scheme presented in \Cref{sec:availability} can provide lights clients with a high level of assurance that block data is available to the network.

\subsubsection{Minimum Unavailable Shares for Unrecoverability}

\Cref{th:min-shares-unavailable} states that data is unrecoverable if a malicious block proposer withholds $k+1$ shares of at least $k+1$ columns or rows; which makes a total of $(k+1)^2$ shares to withhold.

\begin{figure*}
    \centering
    \includegraphics[width=.5\linewidth,keepaspectratio]{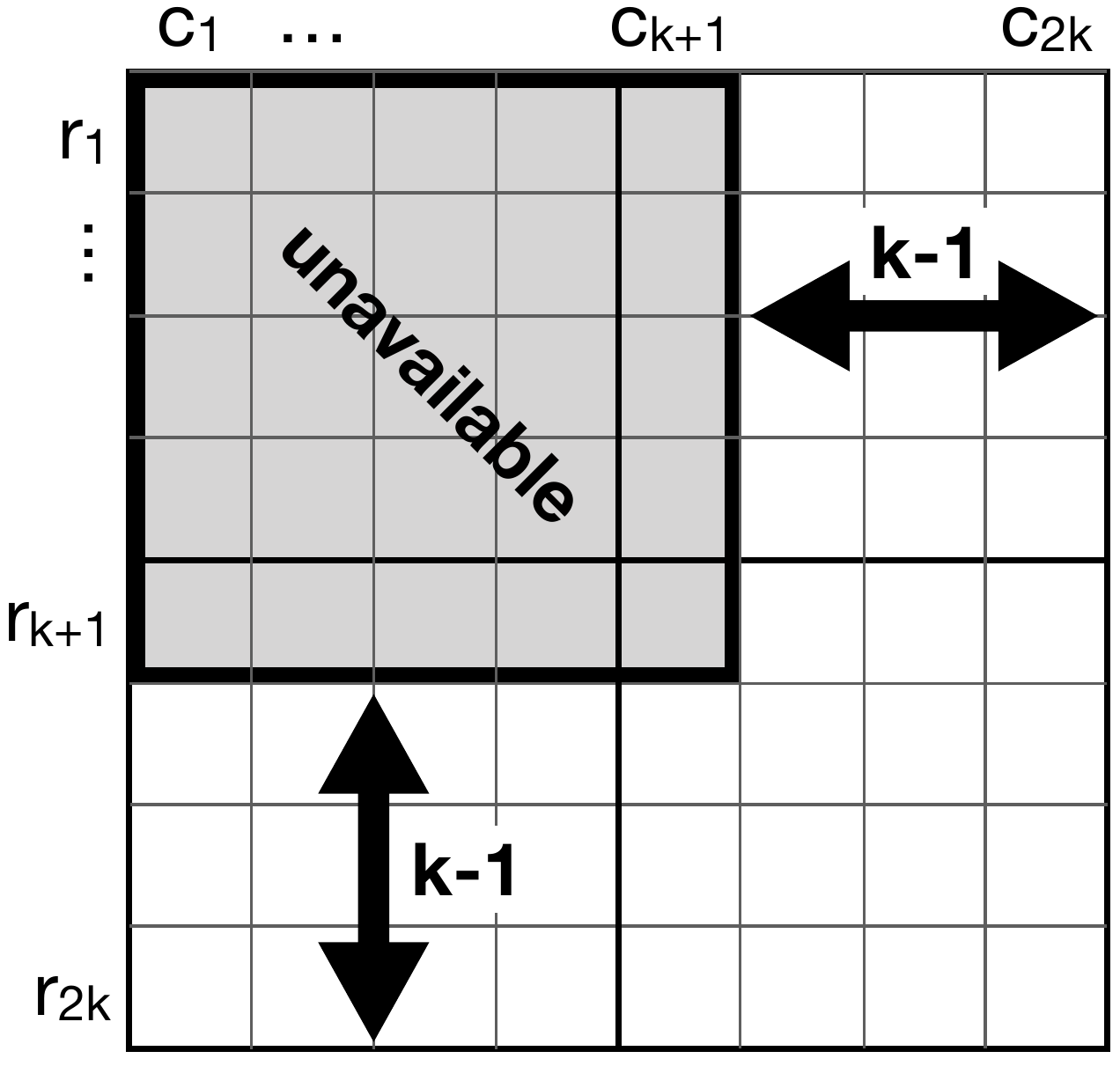}
    \caption{Graphical interpretation of \Cref{th:min-shares-unavailable}. Data is unrecoverable if at least $k+1$ columns (or rows) have each at least $k+1$ unavailable shares.}
   \label{fig:unavailable}
\end{figure*}

\begin{theorem}\label{th:min-shares-unavailable}
    Given a $2k \times 2k$ matrix $E$ as show in \Cref{fig:reedsolomon}, data is unrecoverable if at least $k+1$ columns or rows have each at least $k+1$ unavailable shares. In that case, the minimum number of shares that must be unrecoverable is $(k+1)^2$.
\end{theorem}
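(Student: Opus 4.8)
The plan is to prove the two halves of the statement in turn: first that the configuration described (at least $k+1$ rows or columns, each with at least $k+1$ unavailable shares) genuinely makes the data unrecoverable, and second that no configuration with fewer than $(k+1)^2$ unavailable shares can achieve unrecoverability. The key fact I would invoke throughout is the one-dimensional Reed-Solomon recovery guarantee recalled in \Cref{sec:erasure-codes}: a row or column of the extended matrix has $2k$ entries encoding an original length-$k$ message, so it is recoverable as soon as \emph{any} $k$ of its $2k$ entries are available, and conversely it is unrecoverable precisely when $k+1$ or more of its entries are missing. I would also note that recovery in the 2D scheme proceeds iteratively: whenever a row (resp.\ column) has at most $k$ missing entries it can be fully repaired, which in turn may reduce the number of missing entries in the columns (resp.\ rows) that intersect it, possibly unlocking further repairs.

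First I would prove the forward direction. Suppose an adversary picks $k+1$ columns and, within each, withholds $k+1$ shares, all chosen from the same set of $k+1$ row indices — this is the configuration in \Cref{fig:unavailable}, and it withholds exactly $(k+1)(k+1) = (k+1)^2$ shares. I claim no repair step is ever possible. Each of the $k+1$ affected rows has $k+1$ missing entries (one in each affected column), hence more than $k$ missing, so no row can be repaired. Each of the $k+1$ affected columns likewise has $k+1$ missing entries, so no column can be repaired. The unaffected rows and columns are already complete and repairing them yields nothing new. Hence the iterative recovery process is stuck from the start and the data is unrecoverable; this establishes that such a withholding pattern suffices.

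For the converse — that $(k+1)^2$ is the \emph{minimum} — I would argue the contrapositive: if strictly fewer than $(k+1)^2$ shares are unavailable, the data is recoverable. Consider the set of rows that currently have more than $k$ (i.e.\ at least $k+1$) missing entries; if this set is empty, every row can be repaired immediately and we are done. Otherwise, let $r$ be the number of such ``bad'' rows. Each bad row accounts for at least $k+1$ missing shares, and these are disjoint across rows, so the total number of missing shares is at least $r(k+1)$. Similarly, if $c$ columns are ``bad'' (at least $k+1$ missing entries each), the missing shares in those columns account for at least $c(k+1)$, but I must be careful about double counting against the row bound. The cleanest route is: after repairing every repairable row, the only shares still missing lie in bad rows; restrict attention to the sub-matrix formed by the bad rows. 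Within each bad row all $2k$ entries may still be missing, but I only know at least $k+1$ are. Now run the column-repair step on this reduced picture — a column is stuck only if it still has at least $k+1$ missing entries, which (since missing entries now lie only in the $r$ bad rows) requires $r \ge k+1$. So if $r \le k$, every column gets repaired, which fills in the bad rows, contradiction. Hence unrecoverability forces $r \ge k+1$, and symmetrically (running columns first) $c \ge k+1$, and then the $r \ge k+1$ bad rows each contribute $\ge k+1$ disjoint missing shares, giving at least $(k+1)^2$ missing shares — contradicting our assumption of fewer than $(k+1)^2$.

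The main obstacle I anticipate is making the interleaving of row- and column-repair steps rigorous rather than hand-wavy: the naive double-counting argument (summing $r(k+1)$ from rows and $c(k+1)$ from columns) overcounts the shares in the $r \times c$ intersection, so one cannot simply add them. The fix is to phrase recoverability as the existence of a fixed point of the combined repair operator and to show that any ``stuck'' missing-share pattern must, on \emph{both} axes simultaneously, have $\ge k+1$ lines with $\ge k+1$ missing entries each — at which point restricting to one axis alone already forces the $(k+1)^2$ count without needing to add the two bounds. I would also state explicitly the (mild) assumption that the erasure decoder is exactly the standard RS decoder that needs $k$ symbols, so that ``at most $k$ missing $\Rightarrow$ repairable'' and ``at least $k+1$ missing $\Rightarrow$ not repairable by that line alone'' are both tight.
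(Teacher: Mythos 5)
Your proof is correct, and it is organised differently from the paper's. The paper argues forwards: to make a single share $E_{i,j}$ unrecoverable one must erase at least $k+1$ shares of its row, each of which could otherwise be rebuilt from its column, forcing $k+1$ erasures in each of those $k+1$ columns; the potential double count between the row-first and column-first versions of this argument is then dismissed by an appeal to the symmetry of the matrix (``(i) is equivalent to (ii) on the transpose''). You instead prove minimality by contraposition through an explicit repair process: at any stuck configuration every row still containing erasures must have at least $k+1$ of them, a two-pass (rows then columns) argument shows there must be at least $k+1$ such rows, and the disjointness of entries across rows gives the $(k+1)^2$ bound from one axis alone. This is the more rigorous route --- it makes the termination and counting precise where the paper waves at symmetry --- and you also explicitly verify the sufficiency direction (that the aligned $(k+1)\times(k+1)$ block of erasures admits no repair step), which the paper essentially asserts. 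One point worth making explicit in a final write-up: the theorem's hypothesis as literally stated (``at least $k+1$ rows or columns each with at least $k+1$ unavailable shares'') is not sufficient for unrecoverability when the erasures are scattered --- e.g.\ for $k=2$ one can place three erasures in each of three columns so that all but one row remain repairable and the whole matrix is then recovered --- so your choice to prove sufficiency only for the aligned configuration of \Cref{fig:unavailable} is the right reading of the intended claim, and you should say so rather than leave the reader to wonder whether you silently strengthened the hypothesis.
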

\begin{proof}
    Suppose a malicious block producer wants to make unrecoverable a share $E_{i,j}$ of the $2k \times 2k$ matrix $E$. Recall that Reed-Solomon encoding allow to recover all $2k$ shares from any $k$ shares; the block producer will have to \first make unrecoverable at least $k+1$ shares from the row $E_{i,*}$, and \second make unrecoverable at least $k+1$ shares from the column $E_{*,j}$.

    Let us start from \first; the block producer withholds at least $k+1$ shares from row $E_{i,*}$. However, each of these $k+1$ withheld shares $(E_{i,c_1}, \dots, E_{i,c_{k+1}}) \in E_{i,*}$ can be recovered from the available shares of their respective columns $E_{*,c_1}$, $E_{*,c_2}$\dots, $E_{*,c_{k+1}}$. Therefore, the block producer will also have to withhold at least $k+1$ shares from each of these columns. This gives a total of $(k+1)*(k+1)=(k+1)^2$ shares to withhold. Note that at this point, there are not enough shares left in the matrix to recover any of the $(k+1)^2$ shares of columns $(E_{*,c_1}, \dots E_{*,c_{k+1}})$.

    Let us now consider \second; the block producer withholds at least $k+1$ shares from the column $E_{*,j}$ to make unrecoverable the share $E_{i,j}$. As before, each shares $(E_{r_1,j}, \dots, E_{r_{k+1},j}) \in E_{*,j}$ can be recovered from the available shares of their respective row $E_{r_1,*}, E_{r_2,*}, \dots, E_{r_{k+1},*}$. Therefore, the block producer will also have to withhold at least at least $k+1$ shares from each of these rows. As before, this also gives a total of $(k+1)*(k+1)=(k+1)^2$ shares to withhold.

    However, \first is equivalent to \second by the symmetry of the matrix, and are actually operating on the same shares; executing \first on matrix $E$ is equivalent to executing \second on the transpose of matrix $E$.
\end{proof}\mustafa{@Alberto: Double check this - I fixed a typo (`transposed of the matrix E' $\rightarrow$ `transpose of matrix E').}

\subsubsection{Unrecoverable Block Detection}

\Cref{th:1-player} states the probability that a single light client will sample at least one unavailable share in a matrix with the minimum unavailable shares for unrecoverability, thus detecting that a block may be unrecoverable.

\begin{theorem}\label{th:1-player}
    Given a $2k \times 2k$ matrix $E$ as shown in \Cref{fig:reedsolomon}, where $(k+1)^2$ shares are unavailable. If one player randomly samples $0 < s < (k+1)^2$ shares from $E$, the probability of sampling at least one unavailable share is:
    \begin{equation}\label{eq:p1}
        p_1(X \geq 1) = 1 - \prod_{i=0}^{s-1}\left( 1-\frac{(k+1)^2}{4k^2-i} \right)
    \end{equation}
\end{theorem}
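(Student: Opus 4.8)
The plan is to model the sampling as drawing without replacement from an urn. The matrix $E$ contains $(2k)^2 = 4k^2$ shares in total, of which exactly $(k+1)^2$ are unavailable and the remaining $4k^2 - (k+1)^2$ are available. A single light client draws $s$ distinct shares uniformly at random (without replacement), and we want the probability of the complementary event $X = 0$, i.e.\ that all $s$ sampled shares land among the available ones.

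First I would write $p_1(X \geq 1) = 1 - p_1(X = 0)$ and compute $p_1(X = 0)$ directly as the probability that the first draw is available, times the conditional probability that the second draw is available given the first was, and so on. At draw $i$ (indexing from $i = 0$), there remain $4k^2 - i$ shares in the urn, of which $(k+1)^2$ are still unavailable (since only available shares have been removed so far), so the probability that the $i$-th draw is again available is $1 - \frac{(k+1)^2}{4k^2 - i}$. Multiplying these over $i = 0, 1, \dots, s-1$ gives
\begin{equation*}
    p_1(X = 0) = \prod_{i=0}^{s-1}\left(1 - \frac{(k+1)^2}{4k^2 - i}\right),
\end{equation*}
and taking the complement yields \Cref{eq:p1}. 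Equivalently one can present this as a ratio of binomial coefficients, $p_1(X=0) = \binom{4k^2 - (k+1)^2}{s} \big/ \binom{4k^2}{s}$, and expand it into the stated product form; I would include this as a cross-check.

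The only thing requiring care is the hypothesis $0 < s < (k+1)^2$, which guarantees the product is well defined and the expression is a genuine probability: since $s < (k+1)^2 \le 4k^2$ (for $k \geq 1$), the denominators $4k^2 - i$ are strictly positive for all $0 \le i \le s-1$, and each factor lies strictly between $0$ and $1$, so $p_1(X \geq 1) \in (0,1)$. There is no real obstacle here — the main subtlety is just being explicit that we sample \emph{without} replacement (hence the shrinking denominator $4k^2 - i$ rather than a fixed $4k^2$), and that the count of unavailable shares in the urn stays at $(k+1)^2$ throughout because a client that had already hit an unavailable share would be in the event $X \geq 1$ and so does not contribute to $p_1(X=0)$.
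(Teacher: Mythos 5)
Your proposal is correct and follows essentially the same route as the paper: both compute the complement $p_1(X=0)$ for sampling without replacement, the paper starting from the hypergeometric ratio $\binom{4k^2-q}{s}/\binom{4k^2}{s}$ and expanding it into the product, while you build the product directly from conditional probabilities and note the binomial-coefficient form as a cross-check. The two presentations are interchangeable, so no further comment is needed.
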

\begin{proof}
    We start by assuming that the $2k \times 2k$ matrix $E$ contains $q$ unavailable shares; If the player performs $m$ trials ($0 < s < (k+1)^2$), the probability of finding exactly zero unavailable shares is:
    \begin{equation}\label{eq:p11}
        p_{1}(X=0) = \frac{\binom{4k^2-q}{s}}{\binom{4k^2}{s}}
    \end{equation}
    The numerator of \Cref{eq:p11} computes the number of ways to pick $s$ chunks among the set of unavailable shares $4k^2-q$ (i.e., $\binom{4k^2-q}{s}$). The denominator computes the total number of ways to pick any $s$ samples out of the total number of samples (i.e., $\binom{4k^2}{s}$).

    Then, the probability  $p_{1}(X\geq1)$ of finding at least one unavailable share can be easily computed from \Cref{eq:p11}:
    \begin{eqnarray}\label{eq:p1_long}
        p_{1}(X\geq1) &=& 1-p_{1}(X=0)\\
        &=& 1 - \frac{\binom{4k^2-q}{s}}{\binom{4k^2}{s}}\\
        &=& 1 - \prod_{i=0}^{s-1}\left( 1-\frac{q}{4k^2-i}\right)
    \end{eqnarray}
    which can be re-written as \Cref{eq:p1} by setting $q=(k+1)^2$.
\end{proof}

\begin{figure*}
    \centering
    \includegraphics[width=\linewidth,keepaspectratio]{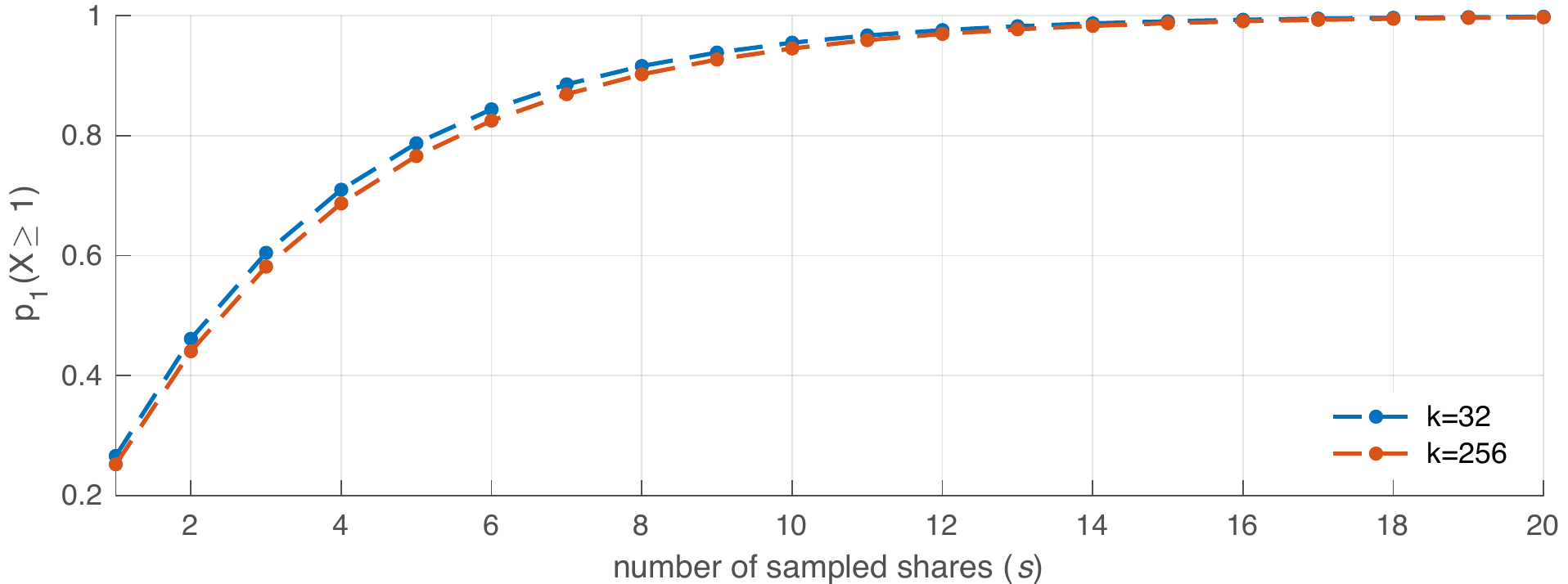}
    \caption{Plot of \Cref{eq:p1}---variation of the probability $p_1(X \geq 1)$ with the number of sampled shares ($s$) (computed for $k=32$ and $k=256$).}
   \label{fig:p1_s}
\end{figure*}

\begin{figure*}
    \centering
    \includegraphics[width=\linewidth,keepaspectratio]{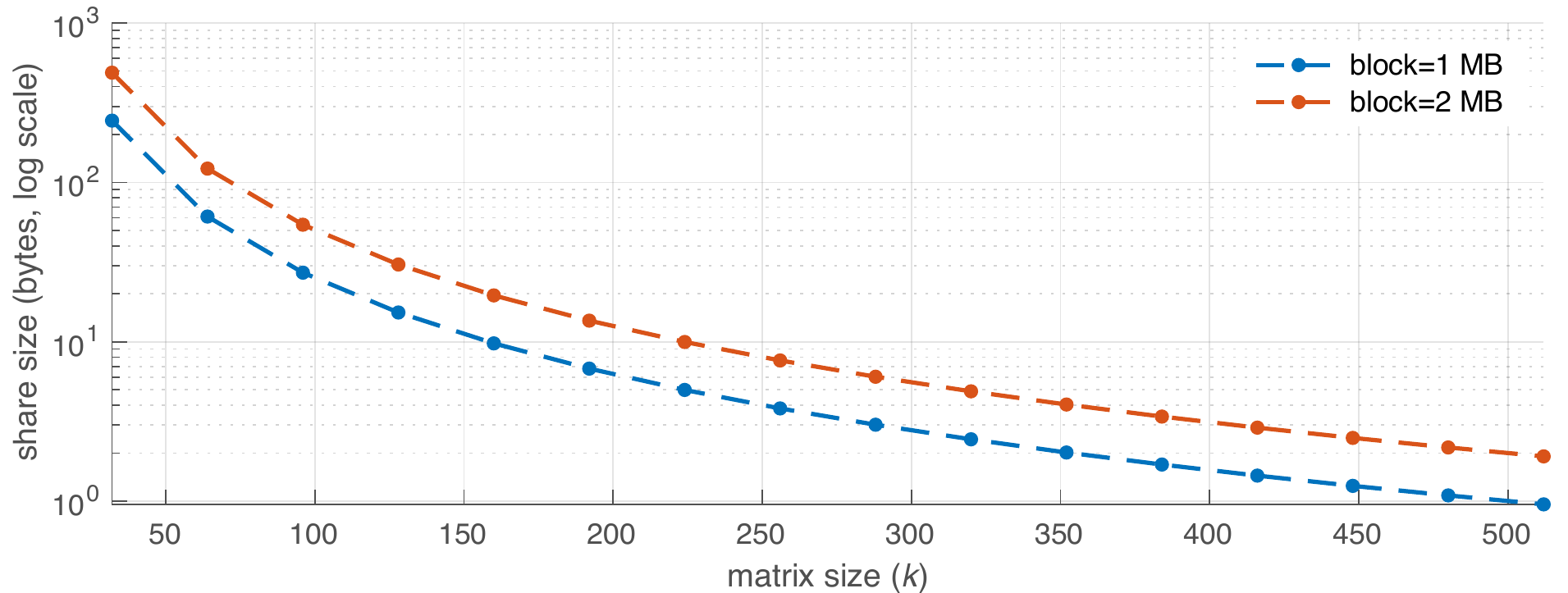}
    \caption{Variation of the shares size with the size of the matrix ($k$).}
   \label{fig:share_size}
\end{figure*}

\Cref{fig:p1_s} shows how this probability varies with $s$ samples for $k=32$ and $k=256$; each light client samples at least one unavailable share with about $60\%$ probability after 3 samplings (\ie after querying respectively $0.07\%$ of the block shares for $k=32$ and $0.001\%$ of the block shares for $k=256$), and with more than $99\%$ probability after 15 samplings (\ie after querying respectively $0.4\%$ of the block shares for $k=16$ and $0.005\%$ of the block shares for $k=256$). \Cref{fig:share_size} shows that light clients would have to download about 3.6 KB of shares to be able to detect incomplete blocks with more than $99\%$ probability for $k=32$, and about 57 bytes of shares for $k=256$.

\Cref{eq:p1_lim} shows a noticeable result: the probability $p_1(X \geq 1)$ is almost independent of $k$ for large values of $k$; it is therefore convenient to have a large matrix size (\ie $k \geq 128$) as this reduces the amount of data that light clients have to download.
\begin{equation}\label{eq:p1_lim}
    \lim_{k \to \infty} p_{1}(X\geq1) = \lim_{k \to \infty} \left(1 - \prod_{i=0}^{s-1}\left( 1-\frac{(k+1)^2}{4k^2-i} \right)\right) = 1 - \left(3/4\right)^s
\end{equation}

Under the enhanced model described in \Cref{sec:selective-share-disclosure}, a malicious block producer could statistically link light clients based on the shares they query; \ie assuming that a light client would never request twice the same share, a block producer can deduce that any request for the same share comes from a different client. To mitigate this problem, light clients could sample without replacement by performing the procedure for sampling with replacement multiple times, and only stop when they have sampled $s$ unique values.

\subsubsection{Multi-Client Unrecoverable Block Detection}

\Cref{th:c-player} captures the probability that more than $\hat{c}$ out of $c$ light clients sample at least one unavailable share in a matrix with the minimum unavailable shares for unrecoverability.

\begin{theorem}\label{th:c-player}
    Given a $2k \times 2k$ matrix $E$ as shown in \Cref{fig:reedsolomon}, where $(k+1)^2$ shares are unavailable. If $c$ players randomly sample $0 < s < (k+1)^2$ shares from $E$, the probability that more than $\hat{c}$ players sample at least one unavailable share is:
    \begin{equation}\label{eq:pc}
        p_{c}(Y > \hat{c}) = 1-\sum_{j=1}^{\hat{c}} \binom{c}{j} \bigr(p_1(X \geq 1)\bigr)^j \bigr(1-p_1(X \geq 1)\bigr)^{c-j}
    \end{equation}
    where $p_1(X \geq 1)$ is given by \Cref{eq:p1}.
\end{theorem}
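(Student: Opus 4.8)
The plan is to recognise that each light client independently samples at least one unavailable share with probability $p_1(X \geq 1)$ given by \Cref{th:1-player}, and that these $c$ events are mutually independent (each client draws its own sample without coordination). Let $Y$ be the number of clients among the $c$ that sample at least one unavailable share. Then $Y$ is a sum of $c$ i.i.d.\ Bernoulli trials, each with success probability $p_1(X \geq 1)$, so $Y \sim \mathrm{Binomial}(c, p_1(X \geq 1))$.

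From here the derivation is routine. First I would write the complementary event: $\{Y > \hat{c}\}$ fails exactly when $1 \le Y \le \hat{c}$ (noting $Y \ge 1$ should be handled — in fact $\Pr[Y > \hat c] = 1 - \Pr[Y \le \hat c]$, and since the paper's stated formula sums $j$ from $1$ to $\hat c$ it is implicitly folding the $j=0$ term into the analysis, or treating $\hat c$ as already accounting for it; either way one applies the binomial PMF). Then I would substitute the binomial probability mass function
\begin{equation*}
    \Pr[Y = j] = \binom{c}{j}\bigl(p_1(X \geq 1)\bigr)^j \bigl(1 - p_1(X \geq 1)\bigr)^{c-j}
\end{equation*}
and sum over the relevant range of $j$, subtracting from $1$ to obtain \Cref{eq:pc}. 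Substituting the closed form for $p_1(X \geq 1)$ from \Cref{eq:p1} then gives the fully explicit expression if desired.

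The only subtlety — and the one point I would be careful about — is justifying the \emph{independence} of the $c$ clients' sampling outcomes. Under the standard network model this is immediate, since each client picks its coordinate set $S$ by an independent internal random choice and the unavailable set is fixed in advance by the block producer; the clients do not communicate (by the network model of \Cref{fig:network-model}) and the underlying population of $4k^2$ shares is the same for everyone, so the marginal success probability of each is exactly $p_1(X \geq 1)$. (Under the enhanced model one would additionally need the requests to be unlinkable, but the theorem as stated is a per-client-independence statement and does not require that.) Everything else is a one-line application of the binomial distribution, so I expect no real obstacle beyond stating this independence assumption cleanly.
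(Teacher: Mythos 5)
Your proposal is correct and follows essentially the same route as the paper: the paper likewise treats the $c$ clients as independent Bernoulli trials with success probability $p_1(X\geq 1)$, writes the binomial PMF for exactly $\hat{c}$ successes, sums it into the cumulative form over $j=1,\dots,\hat{c}$, and takes the complement. Your observation about the missing $j=0$ term is a fair catch---the paper's own sum also starts at $j=1$, so its stated $p_c(Y\leq\hat{c})$ omits $\Pr[Y=0]$ exactly as you note---but this does not change the fact that your argument and the paper's coincide.
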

\begin{proof}
    We start by computing the probability that exactly $\hat{c}$ players sample at least one unavailable share; this probability is given by the binomial probability mass function:
    \begin{equation}\label{eq:pcc}
        p_{s,\hat{c}}(Y = \hat{c}) =  \binom{c}{\hat{c}} \bigr(p_1(X \geq 1)\bigr)^{\hat{c}} \bigr(1-p_1(X \geq 1)\bigr)^{c-\hat{c}}
    \end{equation}
    where $p_1(X \geq 1)$ is given by \Cref{eq:p1}. \Cref{eq:pcc} describes the probability that $\hat{c}$ players succeed to sample at least one unavailable share. This can be viewed as the probability of observing $\hat{c}$ successes each happening with probability $p_1$, and $(c-\hat{c})$ failures each happening with probability $1-p_1$; there are $\binom{c}{\hat{c}}$ possible ways of sequencing these successes and failures.

    \Cref{eq:pcc} easily generalises to the binomial cumulative distribution function expressed in \Cref{eq:pc-less}---the probability of observing at most $\hat{c}$ successes is the sum of the probabilities of observing $j$ successes for $j=1,\dots,\hat{c}$.
    \begin{equation}\label{eq:pc-less}
        p_{c}(Y \leq \hat{c}) = \sum_{j=1}^{\hat{c}} \binom{c}{j} \bigr(p_1(X \geq 1)\bigr)^j \bigr(1-p_1(X \geq 1)\bigr)^{c-j}
    \end{equation}
    Therefore the probability of observing more than $\hat{c}$ successes is given by \Cref{eq:pc-raw} below, which expands as \Cref{eq:pc}.
    \begin{equation}\label{eq:pc-raw}
        p_{c}(Y > \hat{c}) = 1-p_{c}(Y \leq \hat{c})
    \end{equation}
\end{proof}

\begin{figure*}
    \centering
    \includegraphics[width=\linewidth,keepaspectratio]{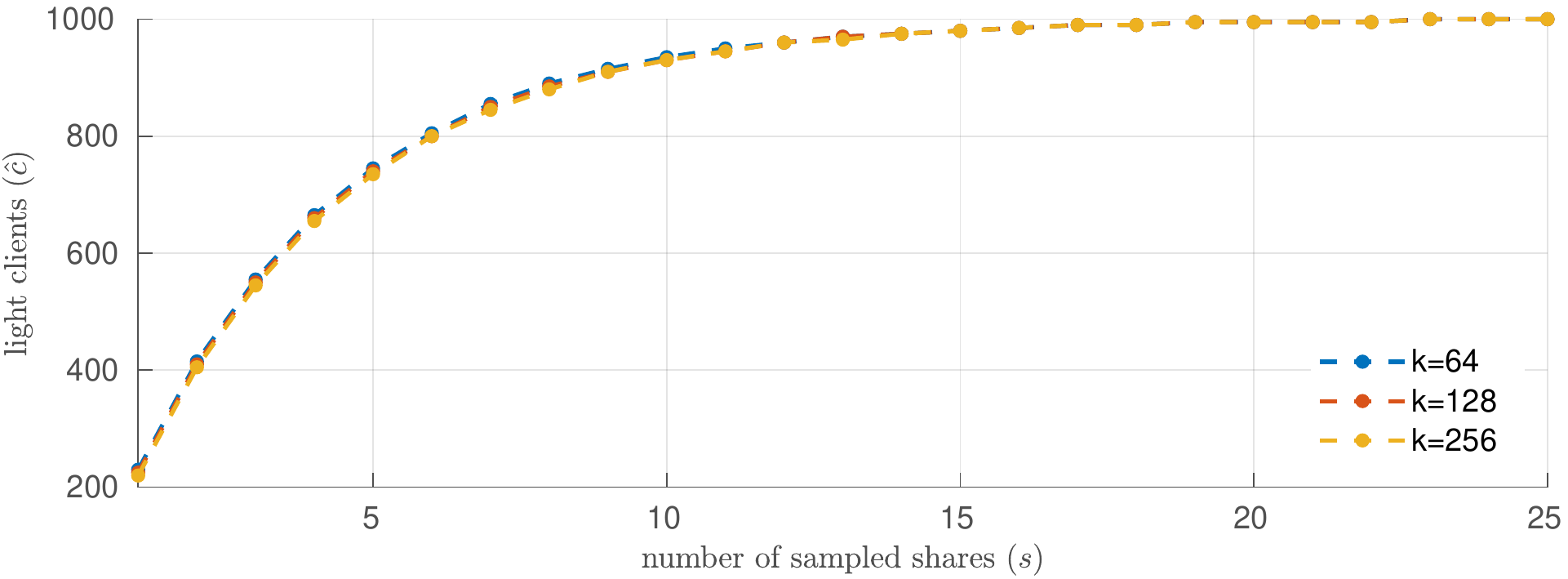}
    \caption{Plot of \Cref{eq:pc}---variation of the number of light clients $\hat{c}$ for which $p_{c}(Y > \hat{c}) \geq 0.99$ with the sampling size $s$. The total number of clients is fixed to $c=1000$, and the matrix sizes are $k=64,128,256$; \Cref{eq:pc} is however almost independent of $k$, as indicated by \Cref{eq:p1_lim}. }
    \label{fig:pc_s}
\end{figure*}

\Cref{fig:pc_s} shows the variation of the number of light clients $\hat{c}$ for which $p_{c}(Y > \hat{c}) \geq 0.99$ with the sampling size $s$. The total number of clients is fixed to $c=1000$, and the matrix sizes are $k=64,128,256$; \Cref{eq:pc} is however almost independent of $k$, as indicated by \Cref{eq:p1_lim}. This figure can be used to determine the number of light clients that will detect incomplete matrices with high probability ($p_{c}(Y > \hat{c}) \geq 0.99$), and that there is little gain in increasing $s$ over 15.

\subsubsection{Recovery and Selective Share Disclosure}\label{sec:recovery}

\Cref{co:matrix-lottery} presents the probability that light clients collectively samples enough shares to recover every share of the $2k \times 2k$ matrix.

If the light clients collectively sample all but $(k+1)^2$ distinct shares, the block producer cannot release any more shares without allowing the network to recover the whole matrix; it follows from \Cref{th:min-shares-unavailable} that light clients need to collect at least:
$$\gamma=(2k)^2 - (k+1)^2 + 1 = k(3k-2)$$
distinct shares (randomly chosen) to have the certainty to be able to recover the $2k \times 2k$ matrix. We are therefore interested in the probability that light clients---each sampling $s$ distinct shares---collectively samples at least $\gamma$ distinct shares; this probability is expressed by \Cref{co:matrix-lottery}.

\begin{theorem}\label{th:euler1785}
    (Euler~\cite{euler1785}) the probability that the number of distinct elements sampled from a set of $n$ elements, after $c$ drawings with replacement of $s$ distinct elements each, is at least all but $\lambda$ elements\footnote{This problem is also known as \emph{the coupon collector's problem} with group drawing~\cite{ferrante2014}.}:
    \begin{eqnarray}\label{eq:pe}
        p_{e}(Z \geq n-\lambda) &=& 1-\sum_{i=1}^\infty(-1)^i{\lambda+i-1\choose\lambda}{n\choose\lambda+i}\bigr(W_i\bigr)^c\\ \nonumber
        \textrm{where} && W_i={n-\lambda-i\choose s}\bigg/{n\choose s}
    \end{eqnarray}
\end{theorem}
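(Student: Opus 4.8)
The plan is to rephrase the event and then attack it with inclusion--exclusion on the ``element never drawn'' events. Write $A_j$ for the event that element $j$ appears in none of the $c$ independent drawings, and let $N=\sum_{j=1}^n \mathbf{1}[A_j]$ be the number of never-drawn elements; then the number of distinct sampled elements is $Z=n-N$, so the quantity to compute is simply $p_e(Z\ge n-\lambda)=\Pr[N\le\lambda]$. The whole proof is then a matter of expressing $\Pr[N\le\lambda]$ through the binomial moments $S_t=\sum_{|J|=t}\Pr[\bigcap_{j\in J}A_j]$ and simplifying.

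First I would compute $S_t$. For a fixed set $J$ of $t$ elements, a single uniformly random $s$-subset of the $n$ elements avoids $J$ with probability $\binom{n-t}{s}\big/\binom{n}{s}$, and since the $c$ drawings are independent, $\Pr[\bigcap_{j\in J}A_j]=\bigl(\binom{n-t}{s}/\binom{n}{s}\bigr)^c$; this depends only on $t=|J|$, so $S_t=\binom{n}{t}\bigl(\binom{n-t}{s}/\binom{n}{s}\bigr)^c$. In particular $S_0=1$ and $S_{\lambda+i}=\binom{n}{\lambda+i}\,W_i^{\,c}$ with $W_i$ exactly as in the statement.

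Next I would invoke the standard ``exactly $m$ occurrences'' inclusion--exclusion identity $\Pr[N=m]=\sum_{t\ge m}(-1)^{t-m}\binom{t}{m}S_t$ (which itself follows from taking expectations in the combinatorial identity $\binom{N}{m}=\sum_{t}\binom{t}{m}\cdot(\#\,t\text{-subsets of occurring }A_j)$ and binomial-inverting), sum it over $m=0,\dots,\lambda$, and swap the order of summation to get $\Pr[N\le\lambda]=\sum_{t\ge0}S_t\sum_{m=0}^{\min(t,\lambda)}(-1)^{t-m}\binom{t}{m}$. The inner alternating sum is $\mathbf{1}[t=0]$ when $t\le\lambda$ (it is $(1-1)^t$) and equals $(-1)^{t+\lambda}\binom{t-1}{\lambda}$ when $t>\lambda$ (using the elementary partial-sum identity $\sum_{m=0}^{\lambda}(-1)^m\binom{t}{m}=(-1)^\lambda\binom{t-1}{\lambda}$). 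Hence $\Pr[N\le\lambda]=S_0+\sum_{t>\lambda}(-1)^{t+\lambda}\binom{t-1}{\lambda}S_t$; substituting $t=\lambda+i$, plugging in $S_0=1$ and $S_{\lambda+i}=\binom{n}{\lambda+i}W_i^c$, and noting $(-1)^{2\lambda+i}=(-1)^i$, one reads off the closed form of \Cref{eq:pe}. Finally, I would remark that $W_i=0$ as soon as $n-\lambda-i<s$ (and $\binom{n}{\lambda+i}=0$ once $\lambda+i>n$), so the series is genuinely finite and the ``$\infty$'' upper limit is harmless.

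There is no deep analytic obstacle here; the real work is bookkeeping. The hard part will be (i) making precise, or citing cleanly, the ``at most $\lambda$ of $n$ events'' form of inclusion--exclusion rather than the usual union bound, and (ii) keeping every sign and index shift straight through the two nested alternating sums and the re-indexing $t=\lambda+i$ --- this is exactly where an off-by-one or a flipped sign can sneak in, so I would verify the final expression against the two sanity checks $\lambda=0$ (which must reduce to the classical ``collect all coupons'' formula) and a tiny explicit case such as $n=3,s=1,c=2,\lambda=1$.
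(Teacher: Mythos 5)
The paper never proves this statement---it is imported wholesale from Euler via the citation, and only the trivial substitution in the corollary that follows is argued. Your inclusion--exclusion derivation therefore supplies something the paper omits, and its skeleton is sound: the reduction to $\Pr[N\le\lambda]$, the binomial moments $S_t=\binom{n}{t}\bigl(\binom{n-t}{s}\big/\binom{n}{s}\bigr)^c$, Jordan's ``exactly $m$'' formula, the collapse of the inner alternating sum to $\mathbf{1}[t=0]$ for $t\le\lambda$ and to $(-1)^{t+\lambda}\binom{t-1}{\lambda}$ for $t>\lambda$, and the remark that the series terminates once $n-\lambda-i<s$ are all correct.

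The one genuine problem is the final step, ``one reads off the closed form of \Cref{eq:pe}''---precisely where you should have run the sanity check you yourself proposed. Your derivation yields
\begin{equation*}
\Pr[N\le\lambda]\;=\;1+\sum_{i=1}^{\infty}(-1)^{i}\binom{\lambda+i-1}{\lambda}\binom{n}{\lambda+i}\bigl(W_i\bigr)^{c},
\end{equation*}
whereas \Cref{eq:pe} reads $1-\sum_{i=1}^{\infty}(-1)^i(\cdots)$; every term of the sum differs in sign. Your own test case $n=3$, $s=1$, $c=2$, $\lambda=1$ settles which is right: here $W_1=1/3$ and $W_2=0$, so your formula gives $1-\binom{1}{1}\binom{3}{2}(1/3)^2=2/3$, which is indeed $\Pr[Z\ge 2]$ for two independent single draws from three elements, while \Cref{eq:pe} as printed evaluates to $4/3$, not even a probability. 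Likewise at $\lambda=0$ your version reduces to the classical $\sum_{i\ge0}(-1)^i\binom{n}{i}W_i^c$ for collecting all coupons and the printed one does not. So your argument is correct but your conclusion that it matches \Cref{eq:pe} is not: the theorem as stated carries a sign error (the summand should be $(-1)^{i+1}$, equivalently the formula should be $1+\sum_i(-1)^i(\cdots)$), and a complete write-up must say so explicitly rather than silently identifying the two expressions.
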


\begin{corollary}\label{co:matrix-lottery}
    Given a $2k \times 2k$ matrix $E$ as shown in \Cref{fig:reedsolomon}, where each of $c$ players randomly samples $s$ distinct shares from $E$. The probability that the players collectively sample at least $\gamma=k(3k-2)$ distinct shares is $p_{e}(Z \geq \gamma)$
\end{corollary}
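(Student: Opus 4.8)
The plan is to read Corollary~\ref{co:matrix-lottery} off of Euler's formula (\Cref{th:euler1785}) by a direct substitution of parameters, so that all the work lies in checking that the sampling model and the numerology line up. The process of \Cref{sec:random-sampling} is exactly the drawing scheme of \Cref{th:euler1785}: there is a ground set of $n$ distinguishable shares, and $c$ players each independently draw $s$ shares, the draws being with replacement \emph{across} players (two players may pick the same share) but without replacement \emph{within} a single player's draw (each player's $s$ shares are distinct). For the $2k \times 2k$ extended matrix the ground set is the set of all of its shares, so $n = (2k)^2 = 4k^2$.

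First I would fix the parameter $\lambda$. We want the probability that the players' samples \emph{collectively} contain at least $\gamma$ distinct shares, equivalently that they miss at most $n-\gamma$ shares; so in the notation of \Cref{th:euler1785} we take $\lambda = n - \gamma$, and then the event $\{Z \geq n-\lambda\}$ is precisely $\{Z \geq \gamma\}$. With $\gamma = k(3k-2)$ this gives $\lambda = 4k^2 - k(3k-2) = k^2+2k = (k+1)^2 - 1$.

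Second I would justify the choice of $\gamma$ from \Cref{th:min-shares-unavailable}: that theorem says the minimum number of shares whose unavailability can render the matrix unrecoverable is $(k+1)^2$, so any collection that is missing at most $(k+1)^2-1$ shares is recoverable. Hence it suffices for the union of the clients' samples to contain at least $(2k)^2 - \bigl((k+1)^2-1\bigr) = (2k)^2 - (k+1)^2 + 1 = k(3k-2) = \gamma$ distinct shares, which is why $\gamma$ is the right threshold; this also explains the off-by-one, namely that the number of shares one is allowed to miss is $(k+1)^2-1$ rather than $(k+1)^2$, a consequence of the word ``minimum'' in \Cref{th:min-shares-unavailable}.

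Finally, substituting $n = 4k^2$ and $\lambda = (k+1)^2-1$ into \Cref{eq:pe} yields $p_{e}(Z \geq \gamma)$ as the probability that the $c$ clients collectively sample at least $\gamma$ distinct shares, which is exactly the statement of the corollary. I do not expect a genuine obstacle here: all of the probabilistic content is delegated to Euler's theorem, which is invoked as a black box, and the only things that need care are the arithmetic identity $(2k)^2 - (k+1)^2 + 1 = k(3k-2)$ and the strict/non-strict bookkeeping for the allowed number of missed shares noted above.
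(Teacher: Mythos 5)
Your proposal is correct and matches the paper's proof, which likewise just substitutes $\lambda = n - \gamma$ and $n = (2k)^2$ into \Cref{eq:pe}; your extra bookkeeping (checking $\lambda = (k+1)^2 - 1$ and justifying the threshold $\gamma = k(3k-2)$ via \Cref{th:min-shares-unavailable}) only makes explicit what the paper states in the surrounding text.
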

\begin{proof}
    \Cref{co:matrix-lottery} can be easily proven by substituting $\lambda=n-\gamma$ and $n=(2k)^2$ into \Cref{eq:pe}.
\end{proof}

\begin{figure*}
    \centering
    \begin{subfigure}[b]{0.48\textwidth}
        \includegraphics[width=\linewidth,keepaspectratio]{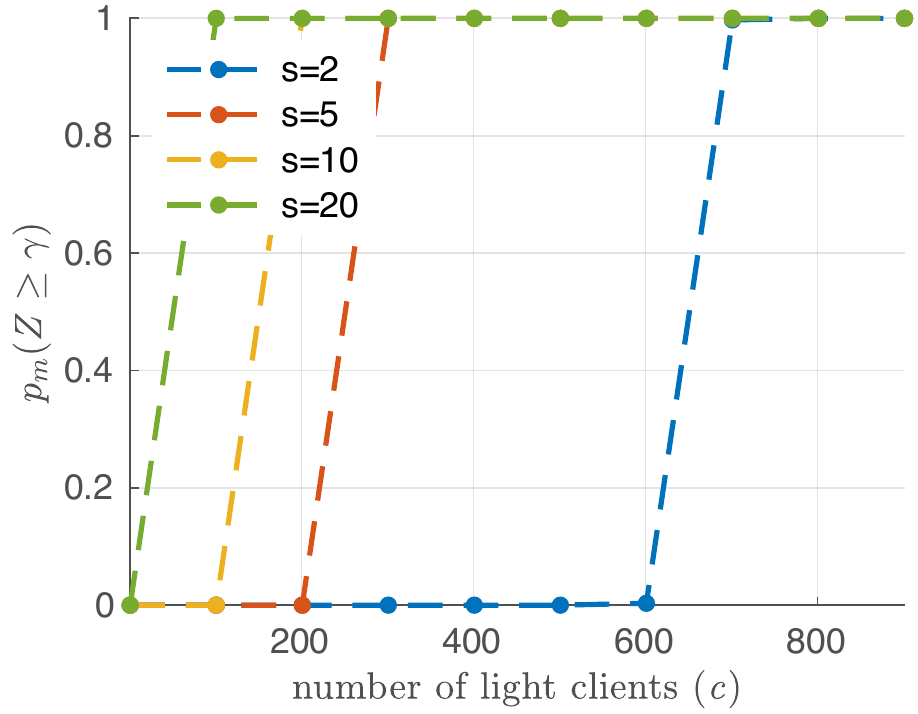}
        \caption{Matrix size $k=16$}
       \label{fig:pe_k=16}
    \end{subfigure}
    \quad
    \begin{subfigure}[b]{0.48\textwidth}
        \includegraphics[width=\linewidth,keepaspectratio]{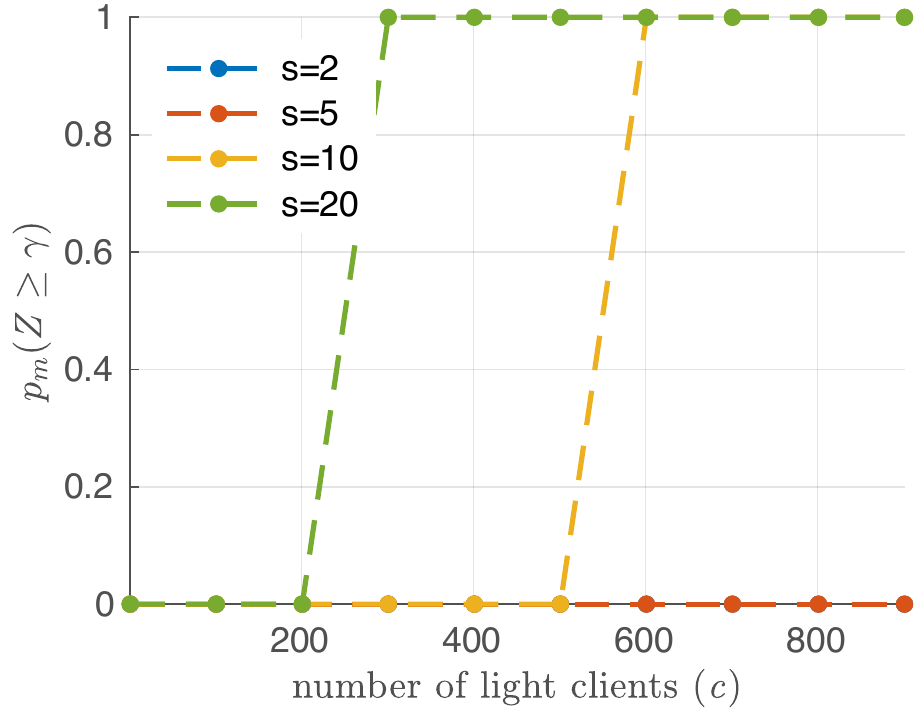}
        \caption{Matrix size $k=32$}
       \label{fig:pe_k=32}
    \end{subfigure}
    \caption{Plot of \Cref{co:matrix-lottery}---variation of the probability $p_{e}(Z \geq \gamma)$ with the number of clients $(c)$ for different values of $s$ and $k$.}
   \label{fig:pe}
\end{figure*}

\begin{table}[t]
    \begin{center}
        \begin{tabular}{L C C C C C}
            \toprule

            \small\bf\boldmath$p_{e}(Z \geq \gamma)$ &
            \small\bf\boldmath$s=2$ & \small\bf\boldmath$s=5$ &
            \small\bf\boldmath$s=10$ &
            \small\bf\boldmath$s=20$ &
            \small\bf\boldmath$s=50$\\

            \midrule

            \small\bf\boldmath$k=16$ & 692 & 277 & 138 & 69 & 28\\
            \small\bf\boldmath$k=32$ & 2805 & 1,122 & 561 & 280 & 112\\
            \small\bf\boldmath$k=64$ & 11,289 & 4,516 & 2,258 & 1,129 & 451\\
            \small\bf\boldmath$k=128$ &  $>$40,000 & $\sim$18,000 & $\sim$9,000 & $\sim$4,500 & 1,811\\

            \bottomrule
        \end{tabular}
    \end{center}
    \caption{Minimum number of light clients ($c$) required to achieve $p_{e}(Z \geq \gamma) > 0.99$ for various values of $k$ and $s$. The approximate values have been approached numerically as evaluating \Cref{eq:pe} can be extremely resource-intensive for large values of $k$.}
   \label{tab:timing}
\end{table}

Contrarily to \Cref{eq:pc}, \Cref{fig:pe} shows that $p_{e}(Z \geq \gamma)$ depends on the matrix size $k$.

\subsection{Properties Security Analysis}\label{sec:properties-security-analysis}

\subsubsection{Standard Model}

\begin{corollary}\label{th:cor_standard_model}
    Under the standard model, a block producer cannot cause soundness (\Cref{def:soundness}) and agreement (\Cref{def:agreement}) to fail for more than $c$ honest clients with a probability lower than $p_1(X \geq 1)$  per client, where $c$ is determined by the probability distribution $p_{e}(Z \geq \gamma)$.
\end{corollary}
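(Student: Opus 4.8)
The plan is to reduce an attack on \Cref{def:soundness} to the case of a genuinely unrecoverable block and then to bound how many honest light clients can be kept from sampling an unavailable share. First I would observe that whenever the producer discloses $\gamma = k(3k-2)$ or more distinct shares over the course of the attack, the gossip and recovery steps of \Cref{sec:random-sampling} carry those shares to an honest full node, which by \Cref{th:min-shares-unavailable} reconstructs the entire block within a bounded delay; then the conclusion of \Cref{def:soundness} holds and soundness fails for no one. (Incorrectly encoding the extended data does not help the attacker, since a full node that recovers one row or column detects the mismatch and broadcasts a codec fraud proof that every client rejects, so without loss of generality the encoding is honest.) Hence any soundness failure forces the producer to keep at least $(k+1)^2 = (2k)^2 - \gamma + 1$ shares hidden throughout --- a fixed ``disclosure budget'' of $\gamma - 1$ distinct shares.

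Next I would use the standard-model assumptions --- sample requests are linkable and arrive in a predictable order --- to fix the producer's optimal strategy: answer each client's query in full as long as the cumulative number of distinct disclosed shares stays below $\gamma$, and deny at least one requested share once the budget is spent. A client answered in full accepts the block; a client denied a requested share rejects it; so soundness, and hence also agreement, fails exactly for the clients served before the budget runs out. By \Cref{co:matrix-lottery}, once $c$ clients have each sampled $s$ distinct shares they have collectively requested at least $\gamma$ distinct shares with probability $p_e(Z \ge \gamma)$; taking $c$ to be the threshold supplied by that distribution, with that probability the budget is exhausted within the first $c$ queries and at most $c$ clients are fooled. For every subsequent client the disclosed set $R$, of size $\gamma - 1$, is already fixed and independent of that client's sample, and its complement is a set of exactly $(k+1)^2$ unavailable shares; so by \Cref{th:1-player} with $q = (k+1)^2$ the client samples an unavailable share --- and therefore rejects --- with probability $p_1(X \ge 1)$. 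Thus the producer can fool at most $c$ clients, and beyond that each additional client it tries to fool succeeds only with probability $1 - p_1(X \ge 1)$, which is the stated bound.

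For agreement I would then remark that a block accepted by an honest light client is either recoverable --- in which case an honest full node serves its data to every honest light client within a bounded delay and agreement holds --- or unrecoverable, in which case the only honest clients that accept it are the at most $c$ that were fooled; every agreement failure is thus contained in a soundness failure and inherits the same bound.

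The step I expect to be the main obstacle is making the adversary's adaptivity rigorous. A producer that observes the predictable, linkable query order may choose \emph{which} $\gamma - 1$ shares to disclose so as to maximise the number of clients whose entire $s$-share sample lies inside the disclosed set, rather than disclosing greedily; I would need to show that this freedom is precisely what \Cref{co:matrix-lottery} already accounts for through $p_e(Z \ge \gamma)$, and that it cannot beat the per-client bound that \Cref{th:1-player} gives once the disclosed set is frozen. Cleanly separating the at-most-$c$ freely-fooled clients from the tail governed only by $p_1(X \ge 1)$, without double-counting requests, is the delicate part of the argument.
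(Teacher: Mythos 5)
Your proposal is correct and follows essentially the same route as the paper's proof: both use \Cref{co:matrix-lottery} to argue that the first $c$ clients collectively exhaust the producer's disclosure budget (so gossip lets an honest full node recover the block unless the producer stops releasing shares early), \Cref{th:1-player} for the per-client detection probability $p_1(X \geq 1)$ once at least $(k+1)^2$ shares remain withheld, and the selective-share-disclosure observation to explain why the first $c$ clients can nonetheless be fooled. Your write-up is more explicit than the paper's short sketch --- notably the reduction of agreement failures to soundness failures and the dismissal of mis-encoded data via codec fraud proofs --- but the underlying argument is the same.
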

\begin{proof}
    \Cref{co:matrix-lottery} shows that with probability $p_{e}(Z \geq \gamma)$, $c$ honest clients will sample enough shares to collectively recover the full block. Honest clients will gossip these shares to full nodes which then gossip them to each other, and within $k \times \delta$ at least one honest full node will then recover the full block data, thus satisfying soundness with a probability of $1 - p_1(X \geq 1)$ per client (the probability of the block producer not passing the client's random sampling challenge when all the block data is available).

    If the data is available and no fraud proofs of incorrectly generated extended data was received by the client, then no other client will receive a fraud proof either, due to our assumption that there is at least one honest full node in the network and honest light clients are not under an eclipse attack, thus satisfying agreement with a probability of $1 - p_1(X \geq 1)$ per client.

    Due to the selective share disclosure attack described in \Cref{sec:selective-share-disclosure}, this means that the block producer can violate soundness and agreement of the first $c$ clients that make sample requests, as the block producer can stop releasing shares just before it is about to release the final shares to allow the block to be recoverable.
\end{proof}

\subsubsection{Enhanced Model}

\begin{corollary}\label{th:cor_enhanced_model}
    Under the enhanced model, a block producer cannot cause soundness (\Cref{def:soundness}) and agreement (\Cref{def:agreement}) to fail with a probability lower than $p_x(X \geq 1)$  per client,
    \begin{equation}\label{eq:px}
        p_x(X \geq 1) = \sum_{i=1}^d \frac{\binom{s}{i}\binom{s(c-1)}{d-i}}{\binom{c \cdot s}{d}}
    \end{equation}
    where $c$ is the number of clients and $d$ is the number of requests that the block producer must deny to prevent full nodes from recovering the data.
\end{corollary}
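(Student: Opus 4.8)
The plan is to reduce the enhanced-model adversary, from the vantage point of a single fixed honest client, to a hypergeometric sampling experiment. First I would describe the block producer's view: in the enhanced model it receives the union of all sample requests---the $s$ distinct shares requested by each of the $c$ honest clients, hence $c\cdot s$ requests in total---in a uniformly random order and with no way to link any two requests to the same client. By \Cref{th:min-shares-unavailable} and \Cref{co:matrix-lottery}, for the block to remain unrecoverable the producer must refuse enough of these requests that strictly fewer than $\gamma=k(3k-2)$ distinct shares are ever released; the minimum number of requests it must deny to achieve this is the quantity $d$ in the statement. Since the detection probability computed below is monotone increasing in the number of denied requests, I may assume without loss of generality that the producer denies exactly $d$ requests.

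The key structural step is that, because requests are unlinkable and arrive in uniformly random order, nothing the producer observes is correlated with the partition of the $c\cdot s$ requests into clients. Hence, from the point of view of any fixed honest client, whatever set $D$ of $d$ requests the producer decides (even adaptively) to deny, that client's own $s$ requests occupy a uniformly random $s$-subset of the $c\cdot s$ request slots, independent of $D$; equivalently $D$ is a uniformly random $d$-subset as far as that client is concerned. This is precisely where the enhanced model improves on the standard model of \Cref{th:cor_standard_model}: the producer cannot steer its denials toward a chosen group of clients, so the resulting bound holds uniformly for \emph{every} client rather than for all but $c$ of them.

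Given this, the computation is routine. The fixed client detects unavailability---and rejects the header in step~4 of the protocol of \Cref{sec:random-sampling}---exactly when at least one of its $s$ requests lies in the denied set $D$. Treating the $c\cdot s$ requests as a population with $s$ ``special'' elements (this client's requests) and $s(c-1)$ others, and $D$ as a random $d$-subset, the probability of at least one special element in $D$ is
\begin{equation*}
    1-\frac{\binom{s(c-1)}{d}}{\binom{cs}{d}}
    \;=\;
    \sum_{i=1}^{d}\frac{\binom{s}{i}\binom{s(c-1)}{d-i}}{\binom{cs}{d}},
\end{equation*}
which is exactly $p_x(X\geq 1)$ (the left equality is the complement rule, the right one decomposes by the number of special elements falling in $D$, using $\binom{s}{i}=0$ for $i>s$; the symmetry of the hypergeometric also lets one write this as $1-\binom{cs-d}{s}/\binom{cs}{s}$). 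Finally I would tie this back to the two properties: whenever the client detects, it rejects $h_i$ and both soundness (\Cref{def:soundness}) and agreement (\Cref{def:agreement}) are preserved with respect to it; and when it fails to detect---probability $1-p_x(X\geq 1)$---it may accept an unrecoverable block, and since that block stays unrecoverable no honest full node can reconstruct it within $k\cdot\delta$ nor issue the relevant codec fraud proof, so other honest clients will not accept it either. Thus the producer cannot force soundness or agreement to fail for a given client with probability exceeding $1-p_x(X\geq 1)$, which is the claim.

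The step I expect to be the main obstacle is making the ``uniformly random $d$-subset'' reduction fully rigorous: one must argue that an \emph{adaptive} denial policy acting on an anonymised, randomly shuffled request stream can do no better, against a fixed client, than committing in advance to a random $d$-subset---a symmetrisation/coupling argument over the hidden client-labelling---and one must pin down the exact relationship between the number of denied \emph{requests} $d$ and the number of withheld \emph{distinct shares} needed to beat the $\gamma$ threshold of \Cref{co:matrix-lottery} when several clients may request the same share. The combinatorial identity itself needs no more than the definition of the hypergeometric distribution.
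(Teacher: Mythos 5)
Your proposal is correct and follows essentially the same route as the paper: both treat the $c\cdot s$ anonymised requests as a population from which $d$ denials are drawn, and compute the per-client detection probability as the hypergeometric sum $\sum_{i=1}^{d}\binom{s}{i}\binom{s(c-1)}{d-i}/\binom{cs}{d}$, then conclude soundness and agreement as in the standard-model corollary. Your added remarks --- the equivalent complement form $1-\binom{s(c-1)}{d}/\binom{cs}{d}$, and the observation that the reduction of an adaptive denial policy to a uniformly random $d$-subset deserves an explicit symmetrisation argument --- are refinements the paper's own (quite informal) proof does not spell out, but they do not change the approach.
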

\begin{proof}
    The proof of \Cref{th:cor_enhanced_model} starts as the proof of \Cref{th:cor_standard_model}; honest light clients collectively samples enough shares to recover the full block data by gossiping these shares to full nodes; soundness is satisfied with probability $1 - p_1(X \geq 1)$ per client. None of the light clients receive fraud proofs if the full data is available and no valid fraud proofs are sent over the network, and all light clients eventually receive a valid fraud proof if one is sent, satisfying agreement with the same probability.

    However, the enhanced model assumes that all sample requests come through a perfect mix network (\ie requests are unlinkable between each other), and defeats the selective shares disclosure attack presented in \Cref{sec:selective-share-disclosure}. The enhanced model removes the notion of `first' clients described in \Cref{th:cor_standard_model} as block producers cannot distinguish which requests comes from which client (since requests are unlikable). Furthermore, if block producers randomly deny some requests, light clients would uniformly see some of their sample requests denied, and each light client would therefore consider the block invalid with equal probability.

    Particularly, if $c$ light clients each sample $0 < s < (k+1)^2$ shares, block producers observe a total of $(c \cdot s)$ indistinguishable requests. Let us assume that a malicious block producer must deny at least $d$ request to prevent full nodes from recovering the block data.
    The probability that a light client observes at least one of its requests denied (and thus rejects the block) is given by $p_x(X \geq 1)$ in \Cref{eq:px}.
    The numerator of \Cref{eq:px} computes the number of ways of picking $i$ of the denied requests among the $s$ requests sent by the light client (i.e., $\binom{s}{i}$), multiplied by the number of ways to pick the remaining $d-i$ requests among the set of requests sent by other light clients: $c \cdot s - s = s(c-1)$ (i.e., $\binom{s(c-1)}{d-i}$). The denominator computes the total number of ways to pick any $d$ requests out of the total number of requests (i.e., $\binom{c \cdot s}{d}$). The probability that at least one of the denied requests comes from a particular client is the sum of the probabilities for $i=1,\dots,d$.
\end{proof}

Like \Cref{eq:p1}, \Cref{eq:px} rapidly grows and shows that light clients reject the block if invalid (for appropriate values of $d$). The value of $d$ can be approximated using \Cref{co:matrix-lottery}, and depends on $s$ and $c$. To provide a quick intuition, if we assume that the light clients collectively sample at least once every share of the block, a malicious block producer must deny at least $(k+1)^2$ requests on different shares to prevent full nodes from recovering the block data; since multiple requests can sample the same shares, $d \geq (k+1)^2$.

    \section{Performance and Implementation}\label{sec:implementation}

We implemented the data availability proof scheme described in \Cref{sec:availability} and a prototype of the state transition fraud proof scheme described in \Cref{sec:fraud-proofs} in 2,683 lines of Go code and released the code as a series of free and open-source libraries.\footnote{2D Reed-Solomon Merkle tree data availability scheme: \url{https://github.com/musalbas/rsmt2d}\\
State transition fraud proofs prototype: \url{https://github.com/asonnino/fraudproofs-prototype}\\
Sparse Merkle tree library: \url{https://github.com/musalbas/smt}}

We first evaluate the space and time complexity of the scheme in \Cref{sec:complexity} and then present the performance benchmarks of our implementation in \Cref{sec:benchmarks}. We perform the measurements on a laptop with an Intel Core i5 1.3GHz processor and 16GB of RAM, and use SHA-256 for hashing.

\subsection{Space and Time Complexity}\label{sec:complexity}

\begin{table}[t]
    \begin{center}
        \begin{tabular}{l c}
            \toprule

            \small\bf Object &
            \small\bf Worst case space complexity \\

            \midrule

            \small\bf State fraud proof & $O(p + p\log(d) + w\log(s) + w)$\\
            \small\bf Availability fraud proof & $O(d^{0.5} + d^{0.5}\log(d^{0.5}))$\\
            \small\bf Single sample response & $O(\mathsf{shareSize} + \log(d))$\\
            \small\bf Block header with axis roots & $O(d^{0.5})$\\
            \small\bf Block header excl. axis roots & $O(1)$\\

            \bottomrule
        \end{tabular}
    \end{center}
    \caption{Worst case space complexity for various objects. $p$ represents the number of transactions in a period, $w$ represents the number of state witnesses for those transactions, $d$ is short for $\mathsf{dataLength}$, and $s$ is the number of key-value pairs in the state tree.}
    \label{tab:space-complexity}
\end{table}

\begin{table}[t]
    \begin{center}
        \begin{tabular}{l c}
            \toprule

            \small\bf Action &
            \small\bf Worst case time complexity \\

            \midrule

            \small\bf [G] State fraud proof & $O(b + p\log(d) + w)$\\
            \small\bf [V] State fraud proof & $O(p + p\log(d) + w)$\\
            \small\bf [G] Availability fraud proof & $O(d^2 + d^{0.5}\log(d^{0.5}))$\\
            \small\bf [V] Availability fraud proof & $O(d + d^{0.5}\log(d^{0.5}))$\\
            \small\bf [G] Availability fraud proof (FFTs) & $O(d\times d^{0.5}\log(d^{0.5}))$\\
            \small\bf [V] Availability fraud proof (FFTs) & $O(d^{0.5}\log(d^{0.5}))$\\
            \small\bf [G] Single sample response & $O(\log(d^{0.5}))$\\
            \small\bf [V] Single sample response & $O(\log(d^{0.5}))$\\

            \bottomrule
        \end{tabular}
    \end{center}
    \caption{Worst case time complexity for various actions, where [G] means generate and [V] means verify. $p$ represents the number of transactions in a period, $b$ represents the number of transactions in the block, $w$ represents the number of state witnesses for those transactions, $d$ is short for $\mathsf{dataLength}$, and $s$ is the number of key-value pairs in the state tree. For generating and verifying state fraud proofs, we assume that each transaction takes the same amount of time to process. For generating fraud proofs, we also include the cost of verifying the block itself.}
    \label{tab:time-complexity}
\end{table}

\Cref{tab:space-complexity} shows the space complexity for different objects. We observe that the size of the state transition fraud proofs only grows logarithmically with the size of the block and state, whereas the availability fraud proofs (as well as block headers with the axis roots) grows at least in proportion to the square root of the size of the block.

\Cref{tab:time-complexity} shows the time complexity for various actions. For generating and verifying fraud proofs, we note that generating and verifying Merkle proofs for state witnesses is $O(1)$ as a sparse Merkle tree has a static depth. The most expensive operation is generating availability fraud proofs, as Lagrange interpolation takes $O(k^2)$ time to encode or decode a row/column with $k$ shares, but this can be reduced to $O(k\log(k))$ time with algorithms based on Fast Fourier Transforms (FFT) \cite{lin2014,reed1978}.

\subsection{Benchmarks}\label{sec:benchmarks}

\begin{table}[t]
    \begin{center}
        \begin{tabular}{l c c}
            \toprule

            \small\bf Object (10 tx/period) &
            \small\bf Size ($\sim$0.25MB block) &
            \small\bf Size ($\sim$1MB block) \\

            \midrule

            \small\bf State fraud proof & 14,090b & 14,410b\\
            \small\bf Availability fraud proof & 12,320b & 26,688b\\
            \small\bf Single sample response & 320b & 368b\\
            \small\bf Block header with. axis roots & 2,176b & 4,224b\\
            \small\bf Block header excl. axis roots & 128b & 128b\\

            \bottomrule
        \end{tabular}
    \end{center}
    \caption{Illustrative sizes for objects for $\sim$0.25MB and $\sim$1MB blocks, assuming that a period consists of 10 transactions, the average transaction size is 225 bytes, and that conservatively there are $2^{30}$ non-default nodes in the state tree.}
    \label{tab:size-benchmarks}
\end{table}

\begin{table}[t]
    \begin{center}
        \begin{tabular}{l c c}
            \toprule

            \small\bf Action &
            \small\bf Time ($\sim$0.25MB block) &
            \small\bf Time ($\sim$1MB block) \\

            \midrule

            \small\bf [G] State fraud proof & 289.78 ms & 981.88 ms\\
            \small\bf [V] State fraud proof & 1.50 ms & 1.50 ms\\
            \small\bf [G] Availability fraud proof & 7.96ms & 50.88ms \\
            \small\bf [V] Availability fraud proof & 0.05ms & 0.19ms\\
            \small\bf [G] Single sample response & $<0.00001$ms & $<0.00001$ms\\
            \small\bf [V] Single sample response & $<0.00001$ms & $<0.00001$ms\\

            \bottomrule
        \end{tabular}
    \end{center}
    \caption{Computation time (mean over ten repeats) for various actions, where [G] means generate and [V] means verify. We assume that a period consists of 10 transactions, the average transaction size is 225 bytes, and each transaction writes to one key in the state tree.}
    \label{tab:time-benchmarks}
\end{table}

\Cref{tab:size-benchmarks} shows the size of various objects when transmitted over the network. We observe that the size of the state fraud proof only increases logarithmically with the size of the block; this is because the number of transactions in a period remains static, but the size of the Merkle proof for each transaction increases slightly. Block size impacts the size of availability fraud proofs and the axis roots the most, as the size of a single row or column is proportional to the square root of the size of the block.

\Cref{tab:time-benchmarks} shows the computation time for generating and verifying various objects; the benchmark for state fraud proof generation includes time spent verifying the block. Although verification is linear in the size of the block, in our implementation it has a high constant factor due to the need for 256 hash operations per update in the tree. This can be improved by using a SHA-256 library that uses SIMD instructions \cite{guilford2012} and splitting up the tree into sub-trees \cite{cutter2016} so that updates can be processed in parallel. Alternatively, a more complex key-value tree construction can be used such as a Patricia tree \cite{wood2018}.

As expected, verifying an availability fraud proof is significantly quicker than generating one. This is because generation requires checking the entire data matrix, whereas verification only requires checking one row or column. Note that we used a library that uses standard Reed-Solomon algorithms that take $O(k^2)$ time to encode/decode---the benchmarks can be improved by using FFT-based algorithms that take $O(k\log(k))$ time.

    \section{Discussion}\label{sec:discussion}

\subsection{Succinct Proofs of Computation}\label{sec:succinct-proofs}

There have been advances in succinct proofs of computation, including zk-SNARKs \cite{ben-sasson2014} and more recently zk-STARKs \cite{ben-sasson2018}, which allow a prover to prove that $f(x, W) = y$ for some provided $x$ and $y$, where even if the witness $W$ is very large in size and the computation $f$ takes a very long time to compute, the proof itself has only logarithmic or constant size and takes logarithmic or constant time to verify.

For future work, we can require block headers to come with such a proof to show that they are correctly erasure coded, removing the need for fraud proofs. Also note that the only significant advantage of the 2D Reed Solomon scheme over the 1D scheme is smaller fraud proofs, so if succinct proofs are used switching back to 1D may be optimal (constructing a legitimate erasure code takes only $O(n\log(n))$ computation time for $n$ shares if Fast Fourier Transforms are used \cite{lin2014,reed1978}).

Note that while succinct proofs of computation of the block state root transition function can be used to remove the need for fraud proofs, they do not remove the need for availability proofs. If malicious majority broadcasts a block for which the data is not available, they can deny honest full nodes the information that they need to construct the full up-to-date state, and generate witnesses (\ie Merkle branches) for transactions touching certain accounts. By preventing witness creation, a block with unavailable data can make accounts permanently inaccessible. \cite{camacho2009} uses information-theoretic arguments to show that even constructions such as cryptographic accumulators cannot remove the need to verify availability of $O(n)$ data to ensure that all witnesses can be correctly updated.

\subsection{Locally decodable codes}

Another strategy for removing the need for fraud proofs from this scheme is to use the local decodability feature of multi-dimensional Reed-Solomon codes\cite{yekhanin2010}. Particularly, we construct a ``proof of proximity'' that consists of a set of pseudorandomly selected rows and columns (or axis-parallel lines more generally for higher-dimensional codes), using the Merkle root of the data as a source of entropy, which the verifier can verify have degree $< k$, thereby probabilistically verifying that a very high percentage of all axis-parallel lines have degree $< k$ and therefore any non-axis-parallel line has degree $< d * k$. The file would be extended from $k * ... * k$ to $(k*2d) * ... * (k*2d)$.

Any block (header) that comes with a valid proof of proximity is admissible, even if some small portion of the extended polynomial data is incorrect. Because of this, a single Merkle branch no longer suffices to prove a single value. Instead, the verifier can select a random non-axis-parallel line that passes through the point, and require the prover to provide at least $\frac{3}{2} * d$ points along the line. The verifier computes the correct value at the desired point, doing error correction if necessary. For added soundness, the verifier can select multiple random non-axis-parallel lines.

This scheme has the benefit that it does not rely on fraud proofs or expensive proofs of computation, but has the weaknesses that \first it requires more encoded data to be stored across the network, though this is mitigated by the fact that the larger number of shares makes it safer to have a smaller number of copies of each share stored across the network, and \second Merkle proofs become roughly two orders of magnitude larger.

    \section{Related Work}\label{sec:related-work}

The original Bitcoin whitepaper \cite{nakamoto2008} briefly mentions the possibility of `alerts', which are messages sent by full nodes to alert light clients that a block is invalid, prompting them to download the full block to verify the inconsistency. Little further exploration has been done on this, partly due to the data availability problem.

There have been online discussions about how one may go about designing a fraud proof system \cite{ranvier2017, todd2016}, however no complete design that deals with all block invalidity cases and data availability has been proposed. These earlier systems have taken the approach of attempting to design a fraud proof for each possible way to create a block that violates the transaction validity rules (\eg double spending inputs, mining a block with a reward too high, etc), whereas this paper generalises the blockchain into a state transition system with only one fraud proof.

On the data availability side, Perard \etal \cite{perard2018} have proposed using erasure coding to allow light clients to voluntarily contribute to help storing the blockchain without having to download all of it, however they do not propose a scheme to allow light clients to verify that the data is available via random sampling and fraud proofs of incorrectly generated erasure codes.
    \section{Conclusion}\label{sec:conclusion}

We presented, implemented and evaluated a complete fraud and data availability proof scheme, which enables light clients to have security guarantees almost at the level of a full node, with the added assumptions that there is at least one honest full node in the network that distributes fraud proofs within a maximum network delay, and that there is a minimum number of light clients in the network to collectively recover blocks.
    \section*{Acknowledgements}

Mustafa Al-Bassam is supported by a scholarship from The Alan Turing Institute and Alberto Sonnino is supported by the European Commission Horizon 2020 DECODE project under grant agreement number 732546.

Thanks to George Danezis, Alexander Hicks and Sarah Meiklejohn for helpful discussions about the mathematical proofs.

    \bibliographystyle{splncs04}
    \bibliography{references.bib}

    \appendix

\section{Double-tree Design}

In the paper we have considered that block headers contain a single data root $\mathsf{dataRoot}_i$ that includes both transactions and intermediate state roots, in order to allow for data availability proofs. However, we can design a more simplified structure with two trees, $\mathsf{txRoot}_i$ and $\mathsf{txLength}_i$ for transactions and $\mathsf{traceRoot}_i$ and $\mathsf{traceLength}_i$ for intermediate state roots. Each leaf of the $\mathsf{txRoot}_i$ is a transaction and each leaf of the $\mathsf{traceRoot}_i$ is an intermediate state root. This does not require arranging data into fixed-size shares, however it does not support data availability proofs.

\subsection{Period Criterion}\label{sec:period-criterion}

The protocol may define a rule such that an intermediate state root must be added to the trace Merkle tree after a certain criterion has been met, for example after every $p$ transactions. We called this a `period criterion'. However unlike the period criterion mechanism described in \Cref{sec:shares-periods}, we assume that the period criterion under the double-tree design is a fixed number of transactions, so that it is possible for a fraud proof verified to know which trace is mapped to a specific transaction without downloading all the transactions (\ie there is a deterministic mapping between transaction indexes and trace indexes).

Based on the rule, we assume a function $\mathsf{period}(\mathsf{txIndex}) = \mathsf{traceIndex}$ that returns the index $\mathsf{traceIndex}$ of the intermediate trace root in the execution trace that is the pre-state for a transaction at index $\mathsf{txIndex}$ in the block's transaction list, or $-1$ if the pre-state if the previous block's $\mathsf{stateRoot}$. We denote $\mathsf{trace}_i^j$ as the $j$th intermediate state root for block $i$ in the tree committed to by $\mathsf{traceRoot}$. If $\mathsf{period}(\mathsf{txIndex}) = \mathsf{traceIndex}$, the pre-state root is $\mathsf{trace}_i^{\mathsf{traceIndex}}$ if $\mathsf{traceIndex} \geq 0$, or $\mathsf{stateRoot}$ if $\mathsf{traceIndex} = -1$.

A standard implementation of $\mathsf{period}$ may be $\mathsf{period}(\mathsf{txIndex}) = \lfloor\frac{\mathsf{txIndex}}{p}\rfloor - 1$ if there is a trace every $p$ transactions.

\subsection{Proof of Invalid State Transition}

A miner may incorrectly compute $\mathsf{stateRoot}_i$, for example by placing a series of random bytes as $\mathsf{stateRoot}_i$, or by crafting a malicious $\mathsf{stateRoot}_i$ that modifies the state in an invalid way. We can thus use the execution trace provided by $\mathsf{traceRoot}_i$ to prove that some part of the execution trace was invalid.

We define a function $\mathsf{VerifyTransitionFraudProof}$ and its parameters which verifies fraud proofs received from full nodes. If the fraud proof is valid, then the block that the fraud proof is for is permanently rejected by the client.

\begin{align*}
&\mathsf{VerifyTransitionFraudProof}(\mathsf{blockHash}_i,\\
    &\indent \mathsf{trace}_i^x, \{\mathsf{trace}_i^x \rightarrow \mathsf{traceRoot}_i\}, x, \tag*{(pre-state root)}\\
    &\indent \mathsf{trace}_i^{x+1}, \{\mathsf{trace}_i^{x+1} \rightarrow \mathsf{traceRoot}_i\}, \tag*{(post-state root)}\\
    &\indent(t_i^y, t_i^{y+1}, ..., t_i^{y+m}), y, \tag*{(transactions)}\\
    &\indent(\{t_i^y \rightarrow \mathsf{txRoot}_i\}, \{t_i^{y+1} \rightarrow \mathsf{txRoot}_i\}\}, ..., \{t_i^{y+m} \rightarrow \mathsf{txRoot}_i\}\}),\\
    &\indent(w_i^y, w_i^{y+1}, ..., w_i^{y+m}), \tag*{(state witnesses)}\\
&) \in \{\mathsf{true}, \mathsf{false}\}
\end{align*}

The pre-state root may be omitted from the fraud proof parameters if it is simply the state root of the previous block, and the post-state root may be omitted if it is the state root of the current block, as the client already knows these roots as they are in the block headers.

$\mathsf{VerifyTransitionFraudProof}$ returns $\mathsf{true}$ if all of the following conditions are met, otherwise $\mathsf{false}$ is returned:
\begin{enumerate}
    \item $\mathsf{blockHash}_i$ corresponds to a block header $h_i$ that the client has downloaded and stored.
    \item $\mathsf{VerifyMerkleProof}(\mathsf{trace}_i^x, \{\mathsf{trace}_i^x \rightarrow \mathsf{traceRoot}_i\}, \mathsf{traceRoot}_i, \mathsf{traceLength}_i, x)$ returns $\mathsf{true}$ if a pre-state root is specified.
    \item $\mathsf{VerifyMerkleProof}(\mathsf{trace}_i^{x+1}, \{\mathsf{trace}_i^{x+1} \rightarrow \mathsf{traceRoot}_i\}, \mathsf{traceRoot}_i, \mathsf{traceLength}_i,\allowbreak x + 1)$ returns $\mathsf{true}$ if a post-state root is specified.
    \item For each transaction $t_i^{y+a}$ in the proof, $\mathsf{period}(y+a) = x$ is true if a pre-state root is specified, otherwise $\mathsf{period}(y+a) = -1$ and $y = 0$ is true.
    \item For each transaction $t_i^{y+a}$ in the proof, $\mathsf{VerifyMerkleProof}(t_i^{y+a}, \{t_i^{y+a} \rightarrow \mathsf{txRoot}_i\},\allowbreak \mathsf{txRoot}_i, \mathsf{txLength}_i, y+a)$ returns $\mathsf{true}$.
    \item Let the intermediate state roots after applying every transaction in the proof one at a time be $\mathsf{interRoot}_i^j = \mathsf{rootTransition}(\mathsf{interRoot}_i^{j-1}, t_i^j, w_i^j)$. If a pre-state root is specfied, then the base case is $\mathsf{interRoot}_i^{y} = \mathsf{trace}_i^x$, otherwise $\mathsf{interRoot}_i^{y} = \mathsf{stateRoot}_{i-1}$. If a post-state is specified, $\mathsf{trace}_i^{x+1} = \mathsf{interRoot}_i^{y+m}$ is true, otherwise $\mathsf{stateRoot}_{i} = \mathsf{interRoot}_i^{y+m}$ and $y+m = \mathsf{txLength}_i$ is true.
\end{enumerate}

\section{Computation of $\mathsf{index}$ in Step 4 of $\mathsf{VerifyCodecFraudProof}$}\label{sec:computation-index}

In Step 4 of $\mathsf{VerifyCodecFraudProof}$ in \Cref{sec:fraud-proofs-extended-data}, $\mathsf{index}$ can be computed as follows:

\begin{itemize}
    \item If $\mathsf{axis} = 0$ and $\mathsf{ax}_x = 0$, $\mathsf{index} = j * \mathsf{matrixWidth}_i + \mathsf{pos}_x$.
    \item If $\mathsf{axis} = 1$ and $\mathsf{ax}_x = 0$, $\mathsf{index} = \mathsf{pos}_x * \mathsf{matrixWidth}_i + j$.
    \item If $\mathsf{axis} = 1$ and $\mathsf{ax}_x = 1$, $\mathsf{index} = \frac{1}{2}{\mathsf{dataLength_i}} + j * \mathsf{matrixWidth}_i + \mathsf{pos}_x$.
    \item If $\mathsf{axis} = 0$ and $\mathsf{ax}_x = 1$, $\mathsf{index} = \frac{1}{2}{\mathsf{dataLength_i}} + \mathsf{pos}_x * \mathsf{matrixWidth}_i + j$.
\end{itemize}
\end{document}